\newtheoremstyle{mystyle}
  {}
  {}
  {}
  {}
  {\bfseries}
  {:}
  { }
  {}
\newtheorem*{remark}{Remark}
\newtheorem{definition}{Definition}
\newtheorem{problem}{Problem}
\newtheorem{theorem}{Theorem}
\newtheorem{proposition}{Proposition}
\newtheorem{lemma}{Lemma}
\DeclareMathOperator{\Image}{Im}
\DeclareMathOperator{\Ker}{Ker}
\DeclareMathOperator{\rank}{rank}
\DeclareMathOperator{\relint}{relint}
\DeclareMathOperator{\Span}{span}
\DeclareMathOperator{\interior}{int}
\DeclareMathOperator{\co}{co}
\DeclareMathOperator{\atan2}{atan2}
\renewcommand*{\arraystretch}{0.7}
\title{Delayed resilient trajectory tracking after partial loss of control authority over actuators}
\author{Jean-Baptiste Bouvier\thanks{Department of Aerospace Engineering and Coordinated Science Laboratory, University of Illinois at Urbana-Champaign, Urbana, IL 61801, USA.
  (bouvier3@illinois.edu).}
\and Himmat Panag\footnotemark[1] \and Robyn Woollands\footnotemark[1] \and Melkior Ornik\footnotemark[1]
}
\begin{document}

\maketitle

\begin{abstract}
    After the loss of control authority over thrusters of the Nauka module, the International Space Station lost attitude control for 45 minutes with potentially disastrous consequences. Motivated by a scenario of orbital inspection, we consider a similar malfunction occurring to the inspector satellite and investigate whether its mission can still be safely fulfilled.
    While a natural approach is to counteract in real-time the uncontrolled and undesirable thrust with the remaining controlled thrusters, vehicles are often subject to actuation delays hindering this approach.
    Instead, we extend resilience theory to systems suffering from actuation delay and build a resilient trajectory tracking controller with stability guarantees relying on a state predictor. We demonstrate that this controller can track accurately the reference trajectory of the inspection mission despite the actuation delay and the loss of control authority over one of the thrusters.
\end{abstract}

\section*{Notation}

For a set $\mathcal{X} \subseteq \mathbb{R}^n$, $\dim(\mathcal{X})$ denotes its dimension, $\partial \mathcal{X}$ its boundary, $\co(\mathcal{X})$ its convex hull, $\interior(\mathcal{X})$ its interior, and $\relint(\mathcal{X})$ its relative interior as defined in \citep{inf_dim_analysis}. The Minkowski addition of $\mathcal{X}$ and $\mathcal{Y} \subseteq \mathbb{R}^n$ is denoted by $\mathcal{X} \oplus \mathcal{Y} := \big\{x + y : (x, y) \in \mathcal{X} \times \mathcal{Y}\big\}$ and their Minkowski difference is $\mathcal{X} \ominus \mathcal{Y} := \big\{ z \in \mathbb{R}^n : z + y \in \mathcal{X}\ \text{for all}\ y \in \mathcal{Y}\big\}$. The set of all functions $f : [0, +\infty) \rightarrow \mathcal{X}$ is denoted by $\mathcal{F}(\mathcal{X})$.
The closed ball of dimension $b$, radius $r \geq 0$, and center $c$ is denoted $\mathbb{B}^b(c,r) := \big\{ x \in \mathbb{R}^b : \|x - c\| \leq r\big\}$.
For a set $\Lambda \subseteq \mathbb{C}$, we say that $Re(\Lambda) \leq 0$ (resp. $Re(\Lambda) = 0$) if the real part of each $\lambda \in \Lambda$ verifies $Re(\lambda) \leq 0$ (resp. $Re(\lambda) = 0$).
The norm of a matrix $A$ is $\|A\| := \underset{x\, \neq\, 0}{\sup} \frac{\|Ax\|}{\|x\|} = \underset{\|x\|\, =\, 1}{\max} \|Ax\|$ and the set of its eigenvalues is $\Lambda(A)$. If $A$ is positive definite, $A \succ 0$, then we define $\lambda_{min}^A := \min \big\{ \Lambda(A) \big\}$, $\lambda_{max}^A := \max \big\{ \Lambda(A) \big\}$ and the associated norm $\|x\|_A := \sqrt{x^\top A x}$.
The controllability matrix of pair $(A,B)$ is $\mathcal{C}(A,B) = \big[B\, A \hspace{-0.4mm}B\, \hdots\, A^{n-1}\hspace{-0.8mm}B\big]$. The linear span of set $\mathcal{X}$ is denoted by $\Span(\mathcal{X}) := \big\{ \sum_{i=1}^k \alpha_i x_i,\ k \in \mathbb{N},\ \alpha_i \in \mathbb{R},\ x_i \in \mathcal{X} \big\}$.

\section{Introduction}

With an increase in the number of active satellites, there is a growing demand for on-orbit satellite inspection, e.g., to assess damage on satellites, prevent unnecessary spacewalks of astronauts, or enforce the ban of space weapons \citep{inspection_thesis, attitude_inspection, inspection}. The importance of satellite inspection is also reflected by the creation of spacecraft entirely dedicated to on-orbit inspections, like the robot Laura from the Rogue Space Systems Corporation\footnote{\url{https://rogue.space/orbots/}}. 

Partly inspired by the on-orbit servicing Restore-L mission \citep{Restore-L}, our scenario of interest consists of an \textit{orbital inspection} of a satellite by a spacecraft that completes a full revolution around the target satellite. 
Following an on-board computer error, the inspecting spacecraft endures a \textit{loss of control authority} \citep{IFAC} over one of its thrusters, similarly to what happened to the Nauka module when docked to the International Space Station \citep{ISS_thruster}. This malfunction consists in one of the thrusters producing uncontrolled and thus possibly undesirable thrust with the same capabilities as before the malfunction. 

Classically, changing or unknown dynamics are studied through robust, adaptive, and fault-tolerant control theories. However, robust control needs the undesirable thrust to be significantly smaller than the controlled thrust \citep{weak_robust_control}, which is not the case in our scenario.
In turn, the estimators of adaptive controllers are unlikely to converge in the presence of uncontrolled thrust varying quickly \citep{weak_robust_control}.
Fault-tolerant theory provides a wider framework, but actuator failure investigations are usually limited either to completely disabled thrusters \citep{abort-safe, Breger}, to actuators ``locking in place" and producing constant inputs \citep{actuator_lock}, or to actuators with reduced effectiveness \citep{loss_control_effectiveness, Fault_Tolerant_Review}. Since the uncontrolled thruster under study can still produce a full range of thrust, this malfunction is not covered by existing fault-tolerant theory \citep{Fault_Tolerant_Review}.
Instead, we adopt the \textit{resilience} framework \citep{IFAC, TAC, SIAM_CT} by assuming the presence of thrust sensors, thus enabling the implementation of fault-detection and isolation methods \citep{thruster_delay}. 

Because of the reaction times of the sensors and thrusters \citep{thruster_delay}, the controller is likely not able counteract the undesirable thrust in real-time. Our objective is then to develop a control strategy for safely carrying out the inspection mission despite the malfunctioning thruster and the actuation delay. More specifically, we want the damaged spacecraft to accurately follow a safe reference trajectory. In our simulation, we choose a minimal-fuel reference trajectory generated by the convex optimization method of \citep{Ortolano}. 

In the preliminary version of this work \citep{AAS}, resilience was not established analytically but empirically with a Monte-Carlo simulation. We address this issue by extending resilience theory to linear systems with actuation delay and by establishing a resilient trajectory tracking controller. Entirely new material includes Sections~\ref{subsec: tracking no delay}, \ref{sec:theory delay}, \ref{sec:application} and \ref{sec:simulation}.

The main contributions of this work are fourfold. Firstly, we establish the resilience of a spacecraft with nonlinear dynamics. Secondly, we extend resilience theory to systems with actuation delay and provide sufficient conditions for these systems to be able to perform trajectory tracking.
Thirdly, we build a resilient trajectory tracking controller with guaranteed performance for the nonlinear spacecraft dynamics.
Finally, we demonstrate that on-orbit inspection can be performed safely despite actuation delay and a loss of control authority over a thruster.

The remainder of this paper is structured as follows. Section~\ref{sec:motivation} introduces our problem of interest and the relative dynamics of the satellites. In Section~\ref{sec:theory NO delay}, we ignore actuation delay to apply existing resilience theory to the malfunctioning spacecraft to demonstrate its remaining capabilities in terms of resilient reachability and trajectory tracking.
Then, in Section~\ref{sec:theory delay} we establish novel results to extend resilience theory to linear systems suffering from actuation delay.
Section~\ref{sec:application} builds on this theory to produce a resilient trajectory tracking controller with guaranteed performance despite actuation delay.
Finally, Section~\ref{sec:simulation} implements this controller in a numerical simulation of the inspection mission.

\section{Motivation and background}\label{sec:motivation}

We consider two spacecraft on circular orbit around Earth. The mission of the chaser spacecraft is to inspect the target spacecraft. As we are interested in proximity operations, we employ the Clohessy-Wiltshire equations in a local-vertical, local-horizontal frame \citep{Ortolano}. The state vector $X = \big( x\ y\ z\ \dot x\ \dot y\ \dot z \big) \in \mathbb{R}^6$ represents the difference in position and velocity between the two spacecraft and initially follows the dynamics
\begin{equation}\label{eq:ODE}
    \dot X(t) = AX(t) + r\bar{B} \bar{u}(t), \qquad X(0) = X_0 \in \mathbb{R}^6,
\end{equation}
with a \textit{thrust-to-mass ratio} $r = 1.5 \times 10^{-4}\, m/s^2$, as we consider a chaser spacecraft of mass $600\, kg$ and five PPS-1350 thrusters of maximal thrust $90\, mN$ \citep{PPS-1350} controlled by the inputs $\bar{u} = \big( \bar{u}_1\ \bar{u}_2\ \bar{u}_3\ \bar{u}_4\ \bar{u}_5 \big) \in [0, 1]^5$.
Because the $z$-dynamics of the Clohessy-Witshire equations are decoupled from the other two axis, we focus on the two-dimensional dynamics in the $(x, y)$-plane, with matrices:
\begin{equation*}\label{eq:2D}
    A = \begin{bmatrix} 0 & 0 & 1 & 0 \\ 0 & 0 & 0 & 1 \\ 3 \Omega^2 & 0 & 0 & 2 \Omega \\ 0 & 0 & -2 \Omega & 0 \end{bmatrix} \qquad \text{and} \qquad \bar{B} = \begin{bmatrix} 0 & 0 & 0 & 0 & 0 \\
     0 & 0 & 0 & 0 & 0 \\ 1 & 1 & -1 & -\sqrt{2} & -1 \\ 1 & -1 & -1 & 0 & 1 \end{bmatrix},
\end{equation*}
where $\Omega = 0.00106\, s^{-1}$ is the mean orbital rate of the target's orbit.
The thrusters do not create any torque \citep{abort-safe} since they are rigidly fixed on the spacecraft and are aligned with its center of mass, as illustrated on Fig.~\ref{fig:spacecraft}. To perform its inspection mission, the chaser spacecraft relies on a fixed camera constantly pointing at the target thanks to the reaction wheels controlling the attitude of the chaser, as shown on Fig.~\ref{fig:spacecraft}. Because of these attitude changes, the relative dynamics lose their linearity to become
\begin{equation}\label{eq:ODE rotation}
    \dot X(t) = A X(t) + r R_\theta(t) \bar{B} \bar{u}(t), \qquad \text{with} \qquad R_\theta(t) = \begin{bmatrix} 1 & 0 & 0 & 0 \\ 0 & 1 & 0 & 0 \\ 0 & 0 & \cos\big( \theta(t) \big) & -\sin\big( \theta(t) \big) \\ 0 & 0 & \sin\big( \theta(t) \big) & \cos\big( \theta(t) \big) \end{bmatrix},
\end{equation}
where $\theta$ is illustrated on Fig.~\ref{fig:spacecraft} and is defined as the 2-argument arctangent $\theta(t) := \atan2 \big( y(t),\, x(t) \big)$.
\begin{figure}[htbp!]
    \centering
    \begin{tikzpicture}[scale = 0.9]
        \draw[->] (0, 0) -- (7, 0);
        \node at (7, 0.3) {$x$};
        \draw[->] (0, 0) -- (0, 4);
        \node at (0.3, 4) {$y$};
        
        \draw[dashed, rotate = 23] (0, 0) -- (5.65, 0);
        \draw[->, very thick] (4, 0) arc (0:23:4);
        \node at (4.1, 0.8) {$\theta$};
        
        \begin{scope}[rotate=23, shift={(7,0)}, scale = 0.3]
            \draw[very thick] (3, 1.8) -- (3, -1.8) -- (2.8, -2) -- (-2.8, -2) -- (-3, -1.8) -- (-3, 1.8) -- (-2.8, 2) -- (2.8, 2) -- (3, 1.8);
            
            \draw[very thick, fill] (-3.6, 0.6) .. controls (-3, 0.2) and (-3, -0.2) .. (-3.6, -0.6) -- (-3.6, 0.6);
            \draw[very thick] (-3, 0.1) -- (-4.2, 0) -- (-3, -0.1);
            \draw[fill] (-4.2, 0) circle (2.5pt);
            
            \draw[very thick] (3, 0.2) -- (3.5, 0.5) -- (3.5, -0.5) -- (3, -0.2);
            \draw[very thick] (2.8, 2) -- (3, 2.6) -- (3.6, 1.9) -- (3, 1.8);
            \draw[very thick] (2.8, -2) -- (3, -2.6) -- (3.6, -1.9) -- (3, -1.8);
            \draw[very thick] (-2.8, 2) -- (-3, 2.6) -- (-3.6, 1.9) -- (-3, 1.8);
            \draw[very thick] (-2.8, -2) -- (-3, -2.6) -- (-3.6, -1.9) -- (-3, -1.8);
            \draw[->, red, ultra thick] (3.5, 0) -- (4.7, 0);
            \node at (4.8, 0.8) {$\bar{u}_4$};
            \draw[->, red, ultra thick] (3.3, 2.25) -- (4.2, 2.9);
            \node at (4.9, 2.9) {$\bar{u}_3$};
            \draw[->, red, ultra thick] (-3.3, -2.25) -- (-4.2, -2.9);
            \node at (-4.9, -2.8) {$\bar{u}_1$};
            \draw[->, red, ultra thick] (-3.3, 2.25) -- (-4.2, 2.9);
            \node at (-4.9, 2.9) {$\bar{u}_2$};
            \draw[->, red, ultra thick] (3.3, -2.25) -- (4.2, -2.9);
            \node at (4.9, -2.8) {$\bar{u}_5$};
            \node at (0, 0) {chaser};
        \end{scope}
        
        \filldraw[white] (0.7, 0.4) -- (-0.7, 0.4) -- (-0.7, -0.4) -- (0.7, -0.4);
        \draw[very thick] (0.7, 0.4) -- (-0.7, 0.4) -- (-0.7, -0.4) -- (0.7, -0.4) -- (0.7, 0.4);
        \node at (0, 0) {target};
    \end{tikzpicture}
    \caption{Relative positions and attitudes of the two satellites, with the camera of the chaser always pointing at the target thanks to an independent attitude control system.}
    \label{fig:spacecraft}
\end{figure}
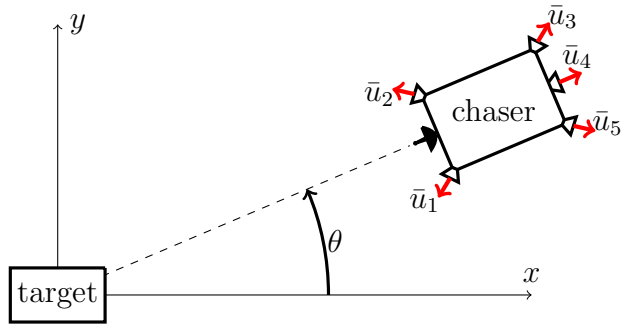

Following the Restore-L protocol \citep{Restore-L}, we assume that the chaser must come within $80\, m$ of the target for a precise optical inspection. Hence, we want the chaser to occupy successively the 5 following holding points $(0, 80)$, $(-80, 0)$, $(0, -80)$, $(80, 0)$ and $(0, 80)$. Using the convex optimization method \citep{Ortolano} we compute on Fig.~\ref{fig:ref orbit} the minimal fuel trajectory linking these waypoints with 90 minutes transfers. For safety considerations, we consider a keep-out sphere (KOS) of radius $R_{KOS} = 50\, m$ around the target as in the Restore-L mission \citep{Restore-L}.

\begin{figure}[htbp!]
    \centering
    \includegraphics[scale = 0.6]{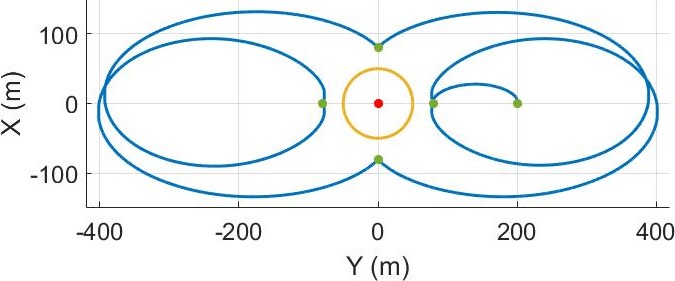}
    \caption{Reference minimal-fuel trajectory (blue) linking the four waypoints (green) to inspect the target satellite (red) without breaching the KOS (yellow).}
    \label{fig:ref orbit}
\end{figure}

Because the chaser is constantly pointing its camera towards the target, its orientation angle $\theta$ (see Fig.~\ref{fig:spacecraft}) varies throughout the trajectory as shown on Fig.~\ref{fig:reference}(\subref{fig:attitude}) and starts at $\theta(0) = 90^\circ$ since the initial position of the spacecraft is on the $y$-axis, as illustrated in Fig.~\ref{fig:ref orbit}.
\begin{figure}[htbp!]
    \centering
    \begin{subfigure}[]{0.46\textwidth}
        \includegraphics[scale=0.43]{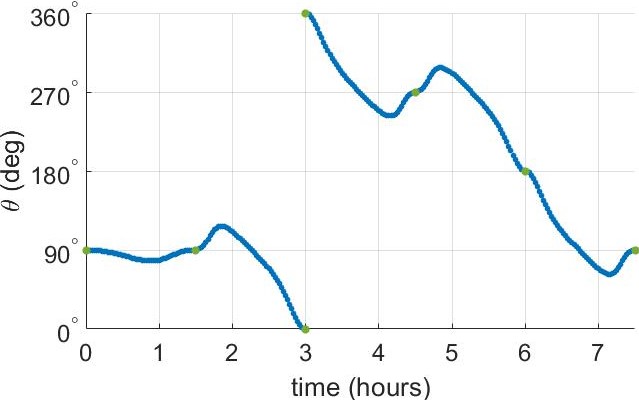}
        \caption{Orientation of the chaser $\theta$ during its mission, with the waypoints in green.}
        \label{fig:attitude}
    \end{subfigure}\hfill
    \begin{subfigure}[]{0.53\textwidth}
        \includegraphics[scale = 0.48]{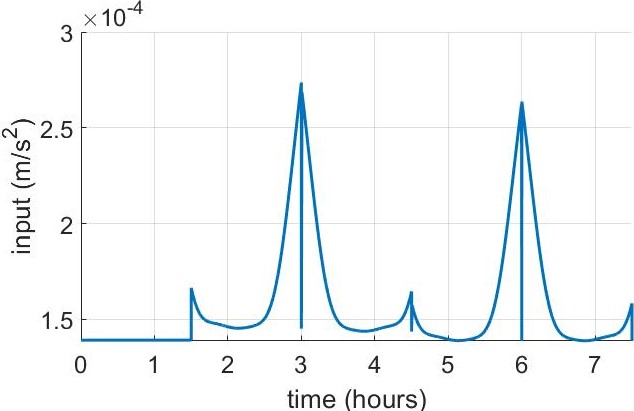}
        \caption{Combined reference thrust signal $\|u_\text{ref}\|$.}
        \label{fig:ref_input}
    \end{subfigure}
    \caption{Chaser orientation and thrust profile for the reference trajectory.}
    \label{fig:reference}
\end{figure}

When the chaser follows the minimal-fuel reference trajectory of Fig.~\ref{fig:ref orbit}, the resulting thrust profile is represented on Fig.~\ref{fig:reference}(\subref{fig:ref_input}) and shows several impulses. Their symmetry comes from the symmetry of the reference trajectory of Fig.~\ref{fig:ref orbit}. This thrust profile is obtained from the convex optimization method \citep{Ortolano} by propagating dynamics \eqref{eq:ODE rotation} along the fuel-optimal trajectory.

Having described the nominal dynamics and the reference trajectory, we now study the malfunction impacting the chaser. Similarly to what happened to the Nauka module docked to the ISS \citep{ISS_thruster}, we assume that an error in the on-board computer of the chaser satellite causes the controller to lose authority over one of the thrusters. The input signal $\bar{u} \in \mathcal{F}(\bar{\mathcal{U}})$ of \eqref{eq:ODE rotation} is then split between the undesirable signal $w \in \mathcal{F}(\mathcal{W})$, $\mathcal{W} = [0, 1]$ and the controlled signal $u \in \mathcal{F}(\mathcal{U})$, $\mathcal{U} = [0, 1]^4$. Matrix $\bar{B}$ is accordingly split into two constant matrices $B \in \mathbb{R}^{4 \times 4}$ and $C \in \mathbb{R}^4$ so that the dynamics of the malfunctioning satellite become
\begin{equation}\label{eq:split ODE rotation}
    \dot X(t) = AX(t) + r R_\theta(t) Bu(t) + r R_\theta(t) Cw(t), \qquad X(0) = X_0 \in \mathbb{R}^4.
\end{equation}
We can then formulate our problem of interest.

\begin{problem}\label{prob:mission}
    With what accuracy can the chaser satellite track the reference trajectory even after enduring a loss of control authority over any one of its thrusters?
\end{problem}

To address Problem~\ref{prob:mission}, we start by determining over which thrusters the spacecraft can resiliently lose control. This investigation is carried out in Section~\ref{sec:theory NO delay} within the resilience framework of \citep{IFAC, ECC} based on the 'snap decision rule' of \citep{Hajek} where the controller $u(t)$ has instantaneous knowledge of the state $X(t)$ and of the uncontrolled input $w(t)$.
This assumption is lifted in subsequent sections where we study resilient trajectory tracking despite actuation delay.

\section{Spacecraft resilience with instantaneous control}\label{sec:theory NO delay}

We rely on \textit{resilience theory} \citep{ECC} to determine the thrusters over which the chaser spacecraft can lose control while remaining capable of accomplishing its mission.

\subsection{Resilient reachability}\label{subsec: spacecraft resilience}

Let us first recall the notion of \textit{resilience} from \citep{ECC} adapted to system~\eqref{eq:ODE rotation}.
\begin{definition}\label{def: resilience}
    System~\eqref{eq:ODE rotation} is \emph{resilient} to the loss of control authority over one of its thrusters if for any target $X_{goal} \in \mathbb{R}^4$ and any undesirable signal $w \in \mathcal{F}(\mathcal{W})$ there exists a control signal $u \in \mathcal{F}(\mathcal{U})$ such that the resulting malfunctioning system~\eqref{eq:split ODE rotation} can reach $X_{goal}$ in finite time.
\end{definition}
Resilience is not automatically sufficient to complete the mission of Problem~\ref{prob:mission} since Definition~\ref{def: resilience} only concerns target reachability and not trajectory tracking.
However, resilience is necessary for mission completion because tracking is impossible without some degree of reachability. Following the method of \citep{Hajek}, to assess the resilience of system~\eqref{eq:ODE rotation}, we introduce the associated dynamics
\begin{equation}\label{eq:Hajek ODE}
    \dot X(t) = AX(t) + r R_\theta(t) p(t), \qquad X(0) = X_0, \qquad p(t) \in \mathcal{P} := B\mathcal{U} \ominus (-C\mathcal{W}),
\end{equation}
where $\mathcal{P}$ is the Minkowski difference between the set of controlled inputs $B\mathcal{U} := \big\{ Bu : u \in \mathcal{U} \big\}$ and the negative of the set of undesirable inputs $-C\mathcal{W} := \big\{-Cw : w \in \mathcal{W} \big\}$, i.e., $\mathcal{P} = \big\{ p \in B\mathcal{U} : p - Cw \in B\mathcal{U}\ \text{for all}\ w \in \mathcal{W} \big\}$. Then, $\mathcal{P}$ represents the amount of control authority remaining after counteracting the worst undesirable input. 

According to \citep{ECC_extended}, if systems~\eqref{eq:ODE rotation} and \eqref{eq:Hajek ODE} were linear, the resilience of system~\eqref{eq:ODE rotation} would be equivalent to the controllability of system~\eqref{eq:Hajek ODE}. Let us now define \textit{controllability} in parallel with \textit{stabilizability} as it will be needed later on.
\begin{definition}
    System~\eqref{eq:Hajek ODE} is \emph{controllable} (resp. \emph{stabilizable}) if for all $X_0 \in \mathbb{R}^4$ and all $X_{goal} \in \mathbb{R}^4$, there exists a time $T$ and a control signal $p \in \mathcal{F}(\mathcal{P})$ driving the state of system~\eqref{eq:Hajek ODE} from $X(0) = X_0$ to $X(T) = X_{goal}$ (resp. to $X(T) = 0$).
\end{definition}

However, nonlinear factor $R_\theta(t)$ in systems~\eqref{eq:ODE rotation} and \eqref{eq:Hajek ODE} prevents us from immediately applying the resilience results of \citep{Hajek, ECC_extended} to these systems. Instead, we provide a partial extension of H\'ajek's duality theorem \citep{Hajek} to nonlinear dynamics.

\begin{theorem}\label{thm: Hajek nonlinear}
    If system~\eqref{eq:Hajek ODE} is controllable, then system~\eqref{eq:ODE rotation} is resilient.
\end{theorem}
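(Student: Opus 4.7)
The plan is to exploit the affine dependence of \eqref{eq:split ODE rotation} on the control pair $(u, w)$ so that a trajectory of the auxiliary system \eqref{eq:Hajek ODE} can be reproduced exactly by \eqref{eq:split ODE rotation}, regardless of the nonlinearity carried by $R_\theta$. Given $X_0$, $X_{goal}$, and any adversarial signal $w \in \mathcal{F}(\mathcal{W})$, controllability of \eqref{eq:Hajek ODE} supplies a time $T$ and a signal $p \in \mathcal{F}(\mathcal{P})$ whose trajectory $X_p$ satisfies $X_p(0) = X_0$ and $X_p(T) = X_{goal}$. The goal is then to build $u \in \mathcal{F}(\mathcal{U})$ so that \eqref{eq:split ODE rotation} driven by $(u, w)$ produces exactly $X_p$.

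The central step is unpacking the Minkowski difference that defines $\mathcal{P}$. Because $\mathcal{P} = B\mathcal{U} \ominus (-C\mathcal{W})$, for every $t \geq 0$ we have $p(t) - Cw(t) \in B\mathcal{U}$, so the pre-image $\{v \in \mathcal{U} : Bv = p(t) - Cw(t)\}$ is nonempty. I would define $u(t)$ as a pointwise selection from this set, obtained explicitly through a pseudo-inverse when $B$ has full column rank on the relevant subspace, or via a standard measurable selection theorem in general. By construction $Bu(t) + Cw(t) = p(t)$ identically in $t$.

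With this alignment, the right-hand sides of \eqref{eq:split ODE rotation} and \eqref{eq:Hajek ODE} agree along any candidate trajectory: both reduce to $AX + rR_\theta(t)p(t)$. Since both Cauchy problems start from $X_0$, uniqueness of solutions forces the trajectory of \eqref{eq:split ODE rotation} under $(u, w)$ to coincide with $X_p$, and in particular to reach $X_{goal}$ at time $T$. As $w$ was arbitrary, this verifies the resilience property demanded by Definition~\ref{def: resilience}.

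The main obstacle, and the reason this is only a partial extension of H\'ajek's duality theorem, is the state-dependent factor $R_\theta(t)$: linearity is what traditionally supports both the forward direction (through superposition of trajectories) and the reverse implication (resilience $\Rightarrow$ controllability), and that machinery is not available here. The forward direction survives precisely because $u$ and $w$ enter \eqref{eq:split ODE rotation} only through the affine combination $Bu + Cw$ multiplied by $R_\theta$, so matching $Bu + Cw$ to $p$ in open loop absorbs the nonlinearity exactly. The only technical point I expect to watch in the write-up is the regularity of the selection $u$ needed to guarantee well-posedness of the closed-loop ODE, which is handled either by the pseudo-inverse formula or by a measurable selection argument on the closed, convex, nonempty-valued map $t \mapsto \{v \in \mathcal{U} : Bv = p(t) - Cw(t)\}$.
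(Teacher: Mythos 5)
Your proposal is correct and follows essentially the same route as the paper's proof: obtain $p \in \mathcal{F}(\mathcal{P})$ from controllability of \eqref{eq:Hajek ODE}, use the Minkowski-difference structure of $\mathcal{P}$ to select $u \in \mathcal{F}(\mathcal{U})$ with $Bu(t) + Cw(t) = p(t)$, and conclude that \eqref{eq:split ODE rotation} reproduces the same trajectory and reaches $X_{goal}$. Your added remarks on measurable selection and uniqueness of solutions are finer technical points that the paper leaves implicit, but they do not change the argument.
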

\begin{proof}
    Let $X_0 \in \mathbb{R}^4$, $X_{goal} \in \mathbb{R}^4$ and $w \in \mathcal{F}(\mathcal{W})$. Since system~\eqref{eq:Hajek ODE} is controllable, there exists $T \geq 0$ and $p \in \mathcal{F}(\mathcal{P})$ driving the state of system~\eqref{eq:Hajek ODE} from $X_0$ to $X_{goal}$ in time $T$. By definition of $\mathcal{P}$ there exists $u \in \mathcal{F}(\mathcal{U})$ such that $Bu(t) = p(t) -Cw(t)$ for all $t \in [0, T]$. Then, applying input signals $u$ and $w$ to system~\eqref{eq:split ODE rotation} drives its state from $X_0$ to $X_{goal}$ in time $T$. Thus, for all $X_{goal} \in \mathbb{R}^4$ and all $w \in \mathcal{F}(\mathcal{W})$ there exists $u \in \mathcal{F}(\mathcal{U})$ driving malfunctioning system~\eqref{eq:split ODE rotation} to $X_{goal}$ in finite time, i.e., system~\eqref{eq:ODE rotation} is resilient.
\end{proof}

\begin{remark}
    The proof of Theorem~\ref{thm: Hajek nonlinear} is actually valid for all nonlinear systems of the form $\dot x(t) = f\big(t, x(t)\big) + g\big(t, x(t)\big) \bar{B} \bar{u}(t)$, and not just for system~\eqref{eq:ODE rotation}. The reverse implication of Theorem~\ref{thm: Hajek nonlinear} remains an open question for nonlinear dynamics.
\end{remark}

Relying on Theorem~\ref{thm: Hajek nonlinear}, we will now investigate whether system~\eqref{eq:ODE rotation} is resilient to a loss of control authority over thruster no. 4. Indeed, this thruster plays a special role in the actuation of the chaser spacecraft due to its location shown on Fig.~\ref{fig:spacecraft} and yields
\begin{equation}\label{eq:B and C}
    B = \left[\def\arraystretch{0.6}\begin{array}{cccc} 0 & 0 & 0 & 0 \\ 0 & 0 & 0 & 0 \\
     1 & 1 & -1 & -1 \\ 1 & -1 & -1 & 1 \end{array}\right] \qquad \text{and} \qquad C = \left[\def\arraystretch{0.6}\begin{array}{c} 0 \\ 0 \\ -\sqrt{2} \\ 0 \end{array}\right].
\end{equation}
The polytopes $B\mathcal{U}$ and $-C\mathcal{W}$ are both in $\mathbb{R}^4$, but since they are of dimension $2$, we only represent these last two dimensions in Fig.~\ref{fig:sets_BU_CW}. Similarly, the Minkowski difference $\mathcal{P}$ is of dimension $2$ and is also illustrated in Fig.~\ref{fig:sets_BU_CW}.

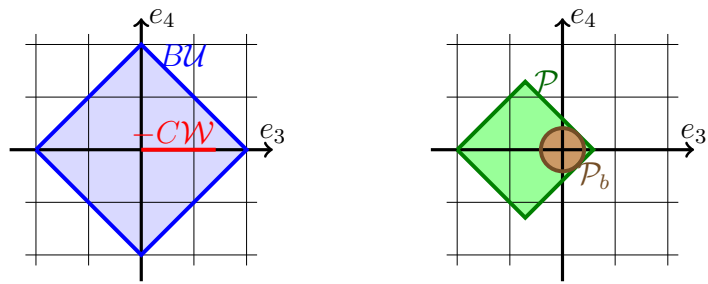
\begin{figure}[htbp!]
    \centering
    \begin{tikzpicture}[scale = 0.7]
        \fill[blue!15!white] (-2, 0) -- (0, -2) -- (2, 0) -- (0, 2) -- (-2, 0);
        
        \draw[ultra thin] (-2.2, 2) -- (2.2, 2);
        \draw[ultra thin] (-2.2, 1) -- (2.2, 1);
        \draw[very thick, ->] (-2.5, 0) -- (2.5, 0);
        \node at (2.5, 0.3) {$e_3$};
        \draw[ultra thin] (-2.2, -1) -- (2.2, -1);
        \draw[ultra thin] (-2.2, -2) -- (2.2, -2);
        
        \draw[ultra thin] (-2, 2.2) -- (-2, -2.2);
        \draw[ultra thin] (-1, 2.2) -- (-1, -2.2);
        \draw[very thick, <-] (0, 2.5) -- (0, -2.5);
        \node at (0.4, 2.5) {$e_4$};
        \draw[ultra thin] (1, 2.2) -- (1, -2.2);
        \draw[ultra thin] (2, 2.2) -- (2, -2.2);
    
        \draw[red, line width=1.5pt] (0, 0) --  (1.4142, 0);
        \node at (0.6, 0.3) {\textcolor{red}{$-C\mathcal{W}$}};
        \draw[draw=blue, line width=1.5pt] (-2, 0) -- (0, -2) -- (2, 0) -- (0, 2) -- (-2, 0);
        \node at (0.8, 1.8) {\textcolor{blue}{$B\mathcal{U}$}};

         \fill[green!40!white] (6, 0) -- (8-0.707, -1.29289) -- (8.5857, 0) -- (8-0.707, 1.29289)  -- (6, 0);
         \fill[brown!80!white] (8, 0) circle (0.41);
        
        \draw[ultra thin] (5.8, 2) -- (10.2, 2);
        \draw[ultra thin] (5.8, 1) -- (10.2, 1);
        \draw[very thick, ->] (5.5, 0) -- (10.5, 0);
        \node at (10.5, 0.3) {$e_3$};
        \draw[ultra thin] (5.8, -1) -- (10.2, -1);
        \draw[ultra thin] (5.8, -2) -- (10.2, -2);
        
        \draw[ultra thin] (6, 2.2) -- (6, -2.2);
        \draw[ultra thin] (7, 2.2) -- (7, -2.2);
        \draw[very thick, <-] (8, 2.5) -- (8, -2.5);
        \node at (8.4, 2.5) {$e_4$};
        \draw[ultra thin] (9, 2.2) -- (9, -2.2);
        \draw[ultra thin] (10, 2.2) -- (10, -2.2);
        
        \draw[green!50!black, line width = 1.5pt] (6, 0) -- (8-0.707, -1.29289) -- (8.5857, 0) -- (8-0.707, 1.29289)  -- (6, 0);
        \node at (7.7, 1.3) {\textcolor{green!40!black}{$\mathcal{P}$}};
        
        \draw[brown!60!black, line width = 1.5pt] (8, 0) circle (0.41);
        \node at (8.6, -0.5) {\textcolor{brown!60!black}{$\mathcal{P}_b$}};
    \end{tikzpicture}
    \caption{2D projection of $B\mathcal{U}$ (blue), $-C\mathcal{W}$ (red), their Minkowski difference $\mathcal{P}$ (green) and the largest ball $\mathcal{P}_b$ (brown) centered at $0$ that fits inside $\mathcal{P}$ for the case of the loss of control authority over thruster no. 4.}
    \label{fig:sets_BU_CW}
\end{figure}

To prove the resilience of system~\eqref{eq:ODE rotation} to a loss of control authority over thruster no. 4, we need to verify the controllability of nonlinear system~\eqref{eq:Hajek ODE}. However, this verification of controllability is generally a difficult problem \citep{Sussmann}. Instead, we will construct a related linear time-invariant system whose controllability implies that of system~\eqref{eq:Hajek ODE}.

\begin{proposition}\label{prop:resilience}
    System~\eqref{eq:ODE rotation} is resilient to a loss of control authority over thruster no. 4.
\end{proposition}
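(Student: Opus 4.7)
The plan is to invoke Theorem~\ref{thm: Hajek nonlinear} and reduce the task to proving controllability of the nonlinear system~\eqref{eq:Hajek ODE}. The obstacle is the state-dependent rotation $R_\theta(t)$, which precludes a direct appeal to linear controllability theory. To sidestep it, I would exploit the fact that the ball $\mathcal{P}_b \subseteq \mathcal{P}$ drawn in Figure~\ref{fig:sets_BU_CW} is rotationally symmetric: $R_\theta$ acts as the identity on the $(e_1, e_2)$-coordinates and as a planar rotation on the $(e_3, e_4)$-coordinates, in which $\mathcal{P}_b$ lies, so $R_\theta \mathcal{P}_b = \mathcal{P}_b$ for every $\theta \in \mathbb{R}$.

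This symmetry will allow me to reduce \eqref{eq:Hajek ODE} to the auxiliary LTI system $\dot X(t) = AX(t) + r q(t)$ with $q(t) \in \mathcal{P}_b$. Indeed, for any LTI control $q : [0, T] \to \mathcal{P}_b$ driving $X_0$ to $X_{goal}$, the input $p(t) := R_{\theta(t)}^{-1} q(t)$ stays in $\mathcal{P}_b \subseteq \mathcal{P}$ and satisfies $R_{\theta(t)} p(t) = q(t)$, so the trajectory of \eqref{eq:Hajek ODE} produced by $p$ coincides with the LTI trajectory and reaches $X_{goal}$ at time $T$.

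It then remains to verify controllability of the auxiliary LTI system. Writing $B' := \big[e_3\ e_4\big]$, so that $\mathcal{P}_b$ is a two-dimensional ball centered at $0$ in the range of $B'$, I would invoke standard constrained-input controllability theory (e.g., Brammer's theorem), which in this setting requires: (i) the Kalman rank condition for $(A, B')$, (ii) $0 \in \relint(\mathcal{P}_b)$, and (iii) $Re\big(\Lambda(A)\big) = 0$. A direct computation yields $\Lambda(A) = \{0, 0, \pm i\Omega\}$, while (i) and (ii) follow from the block structure of $A$ and the definition of $\mathcal{P}_b$. Controllability of \eqref{eq:Hajek ODE} then follows, and Theorem~\ref{thm: Hajek nonlinear} delivers the resilience of \eqref{eq:ODE rotation}. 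The main hurdle is really spotting the rotational invariance $R_\theta \mathcal{P}_b = \mathcal{P}_b$; once this is noted, everything else reduces to routine verifications.
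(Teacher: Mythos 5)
Your proposal matches the paper's proof essentially step for step: invoke Theorem~\ref{thm: Hajek nonlinear}, restrict inputs to the ball $\mathcal{P}_b = \mathbb{B}^2(0, \rho_{max}) \subseteq \mathcal{P}$, exploit its rotational invariance $R_\theta \mathcal{P}_b = \mathcal{P}_b$ to reduce the nonlinear system~\eqref{eq:Hajek ODE} to the LTI system~\eqref{eq:ODE PBall}, and then apply Brammer's constrained-controllability criterion. The only difference is that the paper also explicitly verifies Brammer's condition on real eigenvectors of $A^\top$ (the eigenvectors are multiples of $v = (2\Omega, 0, 0, 1)$, giving $v^\top \hat{B} p = p_2$, which takes both signs); your outline omits this check, but it is harmless here since that condition follows automatically from $0 \in \interior(\mathcal{P}_b)$ combined with the Kalman rank condition.
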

\begin{proof}
    Following Theorem~\ref{thm: Hajek nonlinear}, we will prove controllability of system~\eqref{eq:Hajek ODE} to obtain resilience of system~\eqref{eq:ODE rotation}.
    Because $-C\mathcal{W} \subseteq \interior\big(B\mathcal{U}\big)$, we have $0 \in \interior\big(\mathcal{P}\big)$, as seen on Fig.~\ref{fig:sets_BU_CW}. Then, we can define $\rho_{max}$ as the radius of the largest ball of dimension 2 centered at $0$ and fitting inside $\mathcal{P}$, i.e., $\rho_{max} := \max\big\{ \rho \geq 0 : \mathbb{B}^2(0, \rho) \subseteq \mathcal{P} \big\}$. In our case $\rho_{max} = \sqrt{2} - 1 = 0.414$. Then, the ball $\mathcal{P}_b := \mathbb{B}^2(0, \rho_{max})$ is a subset of $\mathcal{P}$ as illustrated on Fig.~\ref{fig:sets_BU_CW}. Because $\mathcal{P}_b$ is a ball, there is a one-to-one correspondence between inputs $p \in \mathcal{P}_b$ and $R_\theta p \in \mathcal{P}_b$, so the dynamics~\eqref{eq:Hajek ODE} with inputs constrained to $\mathcal{P}_b$ are in fact
    \begin{equation}\label{eq:ODE PBall}
        \dot X(t) = AX(t) + r\hat{B} p(t), \qquad p \in \mathcal{P}_b \subset \mathbb{R}^2, \qquad \hat{B} = \left[\begin{smallmatrix} 0_{2 \times 2} \\ I_2 \end{smallmatrix}\right].
    \end{equation}
    Because the first two rows of $B$ and $C$ defined in Eq.~\eqref{eq:B and C} are null, the geometrical work we completed above only concerns the last two coordinates of the inputs, which explains the structure of matrix $\hat{B}$ in \eqref{eq:ODE PBall}.
    To prove the controllability of system~\eqref{eq:ODE PBall}, we verify the conditions of Corollary 3.7 of \citep{Brammer}:
    \begin{itemize}[topsep=0pt, itemsep=0pt, partopsep=0pt]
        \item $0 \in \mathcal{P}_b$, so taking $p = 0$ makes $\hat{B}p = 0$;
        \item the convex hull of $\mathcal{P}_b$ has a non-empty interior in $\mathbb{R}^2$;
        \item  $\big[ \hat{B}, A \hat{B} \big] = \left[ \begin{smallmatrix} 0_{2 \times 2} & I_2 \\ I_2 & * \end{smallmatrix} \right]$, so $\rank\big( \big[ \hat{B}, A \hat{B} \big] \big) = 4$, i.e., the controllability matrix has full rank;
        \item the real eigenvectors of $A^\top$ are all scalar multiples of $v = (2\Omega, 0, 0, 1)$, which makes $v^\top \hat{B} p = p_2$ for all $p = (p_1, p_2) \in \mathcal{P}_b$ and $p_2$ can be chosen positive or negative since $\rho_{max} > 0$;
        \item the eigenvalues of $A$ are $\big\{ 0, 0, \pm j\Omega \big\}$, so they all have a zero real part.
    \end{itemize}
    Hence, system~\eqref{eq:ODE PBall} is controllable. Because system~\eqref{eq:Hajek ODE} follows the same dynamics as \eqref{eq:ODE PBall}, and has a larger input set encompassing that of system~\eqref{eq:ODE PBall}, it is also controllable. Then, according to Theorem~\ref{thm: Hajek nonlinear} system~\eqref{eq:ODE rotation} is resilient to the loss of control over thruster no. 4.
\end{proof}

We can now proceed to the case of the other four thrusters. Because of their symmetric placement as shown on Fig.~\ref{fig:spacecraft}, we only need to study one thruster and similar conclusions will hold for the others. For the loss of control authority over thruster no. 1, we represent dimensions $3$ and $4$ of polytopes $B\mathcal{U}$, $-C\mathcal{W}$, and $\mathcal{P}$ on Fig.~\ref{fig:sets_BU_CW_1}.

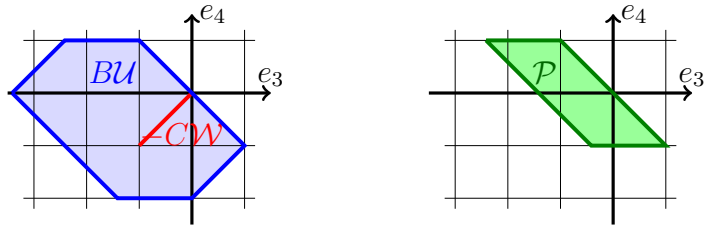
\begin{figure}[htbp!]
    \centering
    \begin{tikzpicture}[scale = 0.7]
        \fill[blue!15!white] (-3.414, 0) -- (-1.414, -2) -- (0, -2) -- (1, -1) -- (-1, 1) -- (-2.414, 1) -- (-3.414, 0);
        
        \draw[ultra thin] (-3.2, 1) -- (1.2, 1);
        \draw[very thick, ->] (-3.5, 0) -- (1.5, 0);
        \node at (1.5, 0.3) {$e_3$};
        \draw[ultra thin] (-3.2, -1) -- (1.2, -1);
        \draw[ultra thin] (-3.2, -2) -- (1.2, -2);
        \draw[ultra thin] (-3, 1.2) -- (-3, -2.2);
        \draw[ultra thin] (-2, 1.2) -- (-2, -2.2);
        \draw[ultra thin] (-1, 1.2) -- (-1, -2.2);
        \draw[very thick, <-] (0, 1.5) -- (0, -2.5);
        \node at (0.4, 1.5) {$e_4$};
        \draw[ultra thin] (1, 1.2) -- (1, -2.2);
    
        \draw[red, line width=1.5pt] (0, 0) --  (-1, -1);
        \node at (-0.2, -0.8) {\textcolor{red}{$-C\mathcal{W}$}};
        \draw[draw=blue, line width=1.5pt] (-3.414, 0) -- (-1.414, -2) -- (0, -2) -- (1, -1) -- (-1, 1) -- (-2.414, 1) -- (-3.414, 0);
        \node at (-1.5, 0.4) {\textcolor{blue}{$B\mathcal{U}$}};
        
         \fill[green!40!white] (8-2.414, 1) -- (8-0.414, -1) -- (9, -1) -- (7, 1)  -- (8-2.414, 1);
        
        \draw[ultra thin] (4.8, 1) -- (9.2, 1);
        \draw[very thick, ->] (4.5, 0) -- (9.5, 0);
        \node at (9.5, 0.3) {$e_3$};
        \draw[ultra thin] (4.8, -1) -- (9.2, -1);
        \draw[ultra thin] (4.8, -2) -- (9.2, -2);
        \draw[ultra thin] (5, 1.2) -- (5, -2.2);
        \draw[ultra thin] (6, 1.2) -- (6, -2.2);
        \draw[ultra thin] (7, 1.2) -- (7, -2.2);
        \draw[very thick, <-] (8, 1.5) -- (8, -2.5);
        \node at (8.4, 1.5) {$e_4$};
        \draw[ultra thin] (9, 1.2) -- (9, -2.2);
        
        \draw[green!50!black, line width = 1.5pt] (8-2.414, 1) -- (8-0.414, -1) -- (9, -1) -- (7, 1)  -- (8-2.414, 1);
        \node at (6.7, 0.4) {\textcolor{green!40!black}{$\mathcal{P}$}};
    \end{tikzpicture}
    \caption{2D projection of $B\mathcal{U}$ (blue), $-C\mathcal{W}$ (red), and their Minkowski difference $\mathcal{P}$ (green) for the case of the loss of control authority over thruster no. 1.}
    \label{fig:sets_BU_CW_1}
\end{figure}

Note that $(0,0)$ is on the boundary of $\mathcal{P}$, so that $\rho_{max} = 0$, no ball of positive radius centered at $0$ can fit inside $\mathcal{P}$. This issue is much more problematic than just preventing us from reusing the proof of Proposition~\ref{prop:resilience}.
Indeed, let $\mathcal{T}_\text{ref} := \big\{ X_\text{ref}(t) : \dot X_\text{ref}(T) = A X_\text{ref} + r R_\theta(t) p_\text{ref}(t),\ t \geq 0 \big\}$ be the reference trajectory of Fig.~\ref{fig:ref orbit}, where control law $p_\text{ref} \in \mathcal{F}(\mathcal{P}_\text{ref})$ is obtained with the trajectory propagation algorithm of \citep{Ortolano}. To produce this trajectory, we need $0 \in \interior(\mathcal{P}_\text{ref})$ as shown on Fig.~\ref{fig: p_ref}.
Since $0 \notin \interior(\mathcal{P})$, we have $\mathcal{P}_\text{ref} \not\subseteq \mathcal{P}$. Therefore, the spacecraft cannot track $\mathcal{T}_\text{ref}$ after the loss of control authority over a thruster other than no. 4.

\begin{figure}[htbp!]
    \centering
    \includegraphics[scale = 0.5]{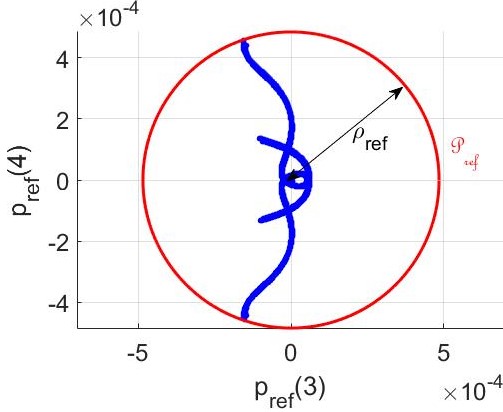}
    \caption{Dimensions 3 and 4 of the reference thrust inputs $p_\text{ref}$ (blue) included in the ball $\mathcal{P}_\text{ref}$ (red) of radius $\rho_\text{ref}$ (black) for the reference trajectory $\mathcal{T}_\text{ref}$.}
    \label{fig: p_ref}
\end{figure}

To simplify further discussions, let us assume that $\mathcal{P}_\text{ref}$ is the smallest ball centered at $0$ encompassing all $p_\text{ref}(t)$, i.e., $\mathcal{P}_\text{ref} = \mathbb{B}^2(0, \rho_\text{ref})$ with $\rho_\text{ref} := \min \big\{\rho > 0 : p_\text{ref}(t) \in \mathbb{B}(0, \rho) \, \text{for all}\ t \geq 0 \big\}$. For the reference trajectory of Fig.~\ref{fig:ref orbit}, radius $\rho_\text{ref} = 4.85 \times 10^{-4}$ and is illustrated on Fig.~\ref{fig: p_ref}.

\subsection{Resilient trajectory tracking and robustness to initial state difference}\label{subsec: tracking no delay}

Following the discussion above, we will only investigate resilient trajectory tracking for the loss of control authority over thruster no. 4. In this scenario, $\rho_{max} = 0.414 >> \rho_\text{ref} = 4.85 \times 10^{-4}$. Then, the malfunctioning spacecraft has a large amount of control authority left even after counteracting the worst undesirable thrust and producing the reference thrust input. Let us detail why this remaining thrust capability will be sorely needed.

The initial state of the malfunctioning spacecraft $X_0$ is most likely not exactly equal to $X_\text{ref}(0)$, the initial of reference trajectory $\mathcal{T}_\text{ref}$, which was designed before the spaceflight. We then need to design a tracking controller with robustness to uncertainty on the initial state. Moreover, if the difference $X_0 - X_\text{ref}(0)$ is not actively reduced, it can grow exponentially with time \citep{Khalil}. Thus, we need the extra thrust capability mentioned earlier to counteract $X(t) - X_\text{ref}(t)$.
Formally, we pick $\varepsilon > 0$ and define input set $\mathcal{P}_\varepsilon := \mathbb{B}^2(0, \varepsilon)$ to overcome $X(t) - X_\text{ref}(t)$. For the robust tracking of $\mathcal{T}_\text{ref}$ to be admissible, we then need $\mathcal{P}_\varepsilon \oplus \mathcal{P}_\text{ref} \subseteq \mathcal{P}$, where we recall $\mathcal{P}$ as the set of control inputs remaining after counteracting the worst undesirable thrust from malfunctioning thruster no. 4. We now introduce the dynamics tasked with counteracting the initial state error
\begin{equation}\label{eq: Hajek epsilon rotated}
    \dot Y(t) = AY(t) + r R_\theta(t) p_\varepsilon(t), \quad Y(0) = X_0 - X_\text{ref}(0), \quad p_\varepsilon(t) \in \mathcal{P}_\varepsilon,
\end{equation}
where $R_\theta(t)$ is the rotation matrix tracking position $X(t)$ of system~\eqref{eq:split ODE rotation}.
\begin{proposition}\label{prop: res traj track rotated}
    If $\varepsilon + \rho_\text{ref} \leq \rho_{max}$, then system~\eqref{eq: Hajek epsilon rotated} is stabilizable in a finite time $t_f$ and the reference trajectory $\mathcal{T}_\text{ref}$ can be tracked exactly by system~\eqref{eq:split ODE rotation} after time $t_f$, i.e., $X(T) = X_\text{ref}(T)$ for all $t \geq t_f$.
\end{proposition}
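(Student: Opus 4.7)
The plan is to transform the tracking problem into a stabilization problem for the error $Z(t) := X(t) - X_\text{ref}(t)$, and then reuse the controllability machinery developed in Proposition~\ref{prop:resilience}. Subtracting the reference dynamics from \eqref{eq:split ODE rotation} gives
\begin{equation*}
    \dot Z(t) = A Z(t) + r R_\theta(t) \big( Bu(t) + Cw(t) - p_\text{ref}(t) \big).
\end{equation*}
The idea is to decompose the control effort into two pieces: one piece cancels the reference/undesirable inputs so that the residual forcing becomes exactly some $p_\varepsilon(t) \in \mathcal{P}_\varepsilon$, and the other piece is designed so that the resulting dynamics on $Z$, which coincide with \eqref{eq: Hajek epsilon rotated}, drive $Z$ to $0$ in finite time.

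First I would establish feasibility of the split. I want to pick $u(t) \in \mathcal{U}$ satisfying $Bu(t) = p_\text{ref}(t) + p_\varepsilon(t) - Cw(t)$ for any admissible $w$ and any chosen $p_\varepsilon \in \mathcal{P}_\varepsilon$. Since $p_\text{ref}(t) \in \mathcal{P}_\text{ref} = \mathbb{B}^2(0, \rho_\text{ref})$ and $p_\varepsilon(t) \in \mathbb{B}^2(0, \varepsilon)$, the hypothesis $\varepsilon + \rho_\text{ref} \leq \rho_{max}$ together with the triangle inequality yields $p_\text{ref}(t) + p_\varepsilon(t) \in \mathbb{B}^2(0, \rho_{max}) = \mathcal{P}_b \subseteq \mathcal{P} = B\mathcal{U} \ominus (-C\mathcal{W})$. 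By definition of the Minkowski difference, for every $w(t) \in \mathcal{W}$ there exists $u(t) \in \mathcal{U}$ realizing the required equality, so the split is always admissible.

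Next I would show that \eqref{eq: Hajek epsilon rotated} is stabilizable in finite time. Because $\mathcal{P}_\varepsilon = \mathbb{B}^2(0, \varepsilon)$ is a ball centered at the origin, the same rotation argument as in the proof of Proposition~\ref{prop:resilience} applies: the set $R_\theta(t) \mathcal{P}_\varepsilon$ coincides with $\mathcal{P}_\varepsilon$ in the last two coordinates, so \eqref{eq: Hajek epsilon rotated} is equivalent to the LTI system $\dot Y = AY + r\hat{B} p_\varepsilon$ with $p_\varepsilon \in \mathcal{P}_\varepsilon$ and $\hat{B}$ as in \eqref{eq:ODE PBall}. All five Brammer conditions verified in the proof of Proposition~\ref{prop:resilience} depend only on the ball being centered at $0$ with positive radius, so they still hold, giving controllability, hence stabilizability in some finite time $t_f$, with an explicit stabilizing signal $p_\varepsilon \in \mathcal{F}(\mathcal{P}_\varepsilon)$ such that $Y(t_f) = 0$.

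Finally I would glue the two pieces together. On $[0, t_f]$ I apply the $u(t)$ produced by the split with this stabilizing $p_\varepsilon$; then $Z$ and $Y$ obey the same ODE with the same initial condition, so $Z(t_f) = Y(t_f) = 0$, i.e., $X(t_f) = X_\text{ref}(t_f)$. For $t \geq t_f$, I switch to the control defined by $Bu(t) = p_\text{ref}(t) - Cw(t)$, which is admissible since $p_\text{ref}(t) \in \mathcal{P}_\text{ref} \subseteq \mathcal{P}$; this makes the error dynamics reduce to $\dot Z = AZ$ with $Z(t_f) = 0$, forcing $Z \equiv 0$ on $[t_f, \infty)$. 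The main obstacle is the state-dependent rotation $R_\theta(t)$, but it is neutralized precisely because $\mathcal{P}_\varepsilon$ and $\mathcal{P}_b$ are rotation-invariant balls, which is exactly why the hypothesis is phrased in terms of the radii $\varepsilon$, $\rho_\text{ref}$, and $\rho_{max}$ rather than polytopal inclusions.
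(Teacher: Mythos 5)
Your proposal is correct and takes essentially the same route as the paper: the identical control decomposition $Bu(t) = p_\text{ref}(t) + p_\varepsilon(t) - Cw(t)$ justified by $\mathcal{P}_\varepsilon \oplus \mathcal{P}_\text{ref} \subseteq \mathcal{P}_b \subseteq \mathcal{P}$, the same rotation-invariance reduction of \eqref{eq: Hajek epsilon rotated} to the LTI system \eqref{eq: ODE Bepsilon} checked against Brammer's conditions, and the same extension $p_\varepsilon \equiv 0$ after $t_f$. The only (cosmetic) difference is that you conclude exact tracking by observing that the error $Z = X - X_\text{ref}$ and the state $Y$ of \eqref{eq: Hajek epsilon rotated} satisfy the same ODE with the same initial condition, whereas the paper verifies the same fact by an explicit variation-of-constants computation.
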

\begin{proof}
    We start with the same trick as in the proof of Proposition~\ref{prop:resilience} by noticing that $\mathcal{P}_\varepsilon = \mathbb{B}^2(0, \varepsilon)$ is left unchanged by the rotation matrix $R_\theta$. Then, system~\eqref{eq: Hajek epsilon rotated} has a one-to-one correspondence with the following linear system
    \begin{equation}\label{eq: ODE Bepsilon}
        \dot Y(t) = AY(t) + r \hat{B} p_\varepsilon(t), \quad Y(0) = X_0 - X_\text{ref}(0), \quad p_\varepsilon(t) \in \mathcal{P}_\varepsilon, \quad \hat{B} = \left[\begin{smallmatrix} 0_{2 \times 2} \\ I_2 \end{smallmatrix}\right].
    \end{equation}
    Since $0 \in \interior(\mathcal{P}_\varepsilon)$, $Re(\Lambda(A))\leq 0$, and $\rank\big( \big[ \hat{B}\, A\hat{B} \big] \big) = 4$, Corollary 3.6 of \citep{Brammer} states that system~\eqref{eq: ODE Bepsilon} is stabilizable in a finite time $t_f$ and so is system~\eqref{eq: Hajek epsilon rotated} by construction.
    
    Therefore, there exists a signal $p_\varepsilon \in \mathcal{F}(\mathcal{P}_\varepsilon)$ on $[0, t_f]$ yielding $Y(t_f) = 0$ in system~\eqref{eq: Hajek epsilon rotated}. Since $0 \in \mathcal{P}_\varepsilon$, we extend the control signal to $p_\varepsilon(t) = 0$ for all $t > t_f$. We now define the control law $p_{track}(t) := p_\varepsilon(t) + p_\text{ref}(t)$. Note that $\mathcal{P}_\varepsilon \oplus \mathcal{P}_\text{ref} = \mathbb{B}(0, \varepsilon) \oplus \mathbb{B}(0, \rho_\text{ref}) = \mathbb{B}(0, \varepsilon + \rho_\text{ref}) \subseteq \mathbb{B}(0, \rho_{max})$ since $\varepsilon + \rho_\text{ref} \leq \rho_{max}$. By definition of $\rho_{max}$, $\mathbb{B}(0, \rho_{max}) \subseteq \mathcal{P}$. Thus,  $\mathcal{P}_\varepsilon \oplus \mathcal{P}_\text{ref} \subseteq \mathcal{P}$, i.e., $p_{track}(t) \in \mathcal{P}$ for all $t \geq 0$.
    
    Let $w \in \mathcal{F}(\mathcal{W})$ be any undesirable input signal. Then, by definition of $\mathcal{P}$, there exists $u \in \mathcal{F}(\mathcal{U})$ such that $Bu(t) = p_{track}(t) - Cw(t)$ for all $t \geq 0$. We now implement this controller for $T \geq t_f$ in system~\eqref{eq:split ODE rotation}:
    \begin{align*}
        X(T) &= e^{AT}\left(X_0 + \int_{0}^T \hspace{-3mm} e^{-At} r R_\theta(t) \big( Bu(t) + Cw(t) \big)\, dt \right) \\
        &= e^{AT}\left(X_0 + \int_{0}^T \hspace{-3mm} e^{-At} r R_\theta(t) \big( p_\varepsilon(t) + p_\text{ref}(t) \big)\, dt \right)\\
        &= e^{AT}\left(X_0 + \int_{0}^{T} \hspace{-3mm} e^{-At} r R_\theta(t) p_\varepsilon(t)\, dt + e^{-AT} X_\text{ref}(T) - X_\text{ref}(0) \right),
    \end{align*}
    because $X_\text{ref}(T) = e^{AT}\left( X_\text{ref}(0) + \int_0^T e^{-At} r R_\theta(t) p_\text{ref}(t)dt \right)$.
    Then, 
    \begin{align*}
        X(T) - X_\text{ref}(T) &= e^{AT}\left( X_0 - X_\text{ref}(0) + \int_0^{T} \hspace{-3mm} e^{-At} r R_\theta(t) p_\varepsilon(t) \, dt \right) \\
        &= e^{AT}\left( X_0 - X_\text{ref}(0) + \int_0^{t_f} \hspace{-3mm} e^{-At} r R_\theta(t) p_\varepsilon(t) \, dt \right),
    \end{align*}
    since $p_\varepsilon(t) = 0$ for $t > t_f$. By definition of $p_\varepsilon$,
    \begin{equation*}
        Y(t_f) = 0 = e^{At_f} \left( Y(0) + \int_0^{t_f} \hspace{-3mm} e^{-At} r R_\theta(t) p_\varepsilon(t)\, dt \right), \quad \text{i.e.}, \quad X_0 - X_\text{ref}(0) + \int_0^{t_f} \hspace{-3mm} e^{-At} r R_\theta(t) p_\varepsilon(t)\, dt = 0.
    \end{equation*}
   Therefore, $X(T) = X_\text{ref}(T)$ for all $T \geq t_f$.
\end{proof}

Proposition~\ref{prop: res traj track rotated} states that as long as $\varepsilon + \rho_\text{ref} \leq \rho_{max}$, there exists a finite time $t_f$ after which any trajectory $\mathcal{T}_\text{ref}$ can be tracked perfectly despite the loss of control authority over a thruster.
Since $\varepsilon$ describes the maximal input magnitude of system~\eqref{eq: Hajek epsilon rotated}, $\varepsilon$ is inversely correlated with its stabilization time $t_f$. Then, the constraint $\varepsilon + \rho_\text{ref} \leq \rho_{max}$ yields that the smaller $\rho_\text{ref}$, the larger $\varepsilon$ and so the smaller $t_f$ is. In other words, the smaller the inputs required to track the reference trajectory, the faster the spacecraft can resume perfect tracking after a loss of control authority.
Let us now investigate how the spacecraft would perform if the controller could not react instantly to undesirable thrust inputs.

\section{Resilience theory in the presence of actuation delay}\label{sec:theory delay}

In order to account for the unavoidable sensors and thrusters delays on spacecraft \citep{thruster_delay}, we now assume that the controller operates with a constant input delay $\tau > 0$ so that the dynamics of the spacecraft are in fact
\begin{equation}\label{eq:delayed split ODE rotation}
    \dot X(t) = A X(t) + r R_\theta(t) Bu\big(t, X(t-\tau), w(t-\tau) \big) + r R_\theta(t) C w(t), \quad X(0) = X_0 \in \mathbb{R}^4.
\end{equation}
The controller cannot react immediately to a change of the undesirable input $w(t)$ and cancel it instantaneously as in Section~\ref{sec:theory NO delay}. Only starting at $t+\tau$ can the controller try to counteract $Cw(t)$. However, at time $t+\tau$ the effect of $w(t)$ on the state $X(t+\tau)$ has become $e^{A\tau}Cw(t)$. Hence, set $\mathcal{P}$ introduced in \eqref{eq:Hajek ODE} does not describe the remaining control authority anymore.

To study whether the more realistic spacecraft dynamics \eqref{eq:delayed split ODE rotation} can resiliently track a reference trajectory, we first need to establish several novel theoretical results. Indeed, resilience in the presence of actuation delay has never been investigated before. We first want to establish general results, before specializing them to dynamics \eqref{eq:delayed split ODE rotation}. Let us first work on general linear systems with input delay and no rotation $R_\theta$. The rotation matrix would only obfuscate the already complex theory of resilience in the presence of actuation delay.

\subsection{Framework for actuation delay}\label{subsec: delay res frame}

We study a linear system of initial dynamics $\dot x(t) = Ax(t) + \bar{B}\bar{u}(t)$ with $x(0) = x_0 \in \mathbb{R}^n$, $\bar{u}(t) \in \bar{\mathcal{U}}$ an hyperrectangle in $\mathbb{R}^{m+q}$ and $A \in \mathbb{R}^{n \times n}$, $\bar{B} \in \mathbb{R}^{n \times (m+q)}$ constant matrices. Similarly to the motivating spacecraft scenario, this system suffers a loss of control authority over $q$ of its initial $m+q$ actuators, and the controller is further inflicted with a constant actuation delay $\tau > 0$. We split the signal $\bar{u}$ into its controlled part $u \in \mathcal{F}(\mathcal{U})$ and its uncontrolled part $w \in \mathcal{F}(\mathcal{W})$ with $\mathcal{U}$ and $\mathcal{W}$ hyperrectangles in $\mathbb{R}^m$ and $\mathbb{R}^q$ respectively. Matrix $\bar{B}$ is accordingly split into $B \in \mathbb{R}^{n \times m}$ and $C \in \mathbb{R}^{n \times q}$ such that the dynamics of the malfunctioning system are
\begin{equation}\label{eq: split delayed}
    \dot x(t) = Ax(t) + C w(t) + B u\big(t, x(t-\tau), w(t-\tau)\big), \quad x(0) = x_0 \in \mathbb{R}^n, \quad u(t) \in \mathcal{U}, \quad w(t) \in \mathcal{W}.
\end{equation}
We want to establish conditions under which system~\eqref{eq: split delayed} can resiliently reach a target set.

\begin{definition}\label{def: linear res reach}
    A convex set $\mathcal{G} \subseteq \mathbb{R}^n$ is \emph{resiliently reachable} at time $T$ from $x_0$ by system~\eqref{eq: split delayed} if for all undesirable inputs $w \in \mathcal{F}(\mathcal{W})$ there exists a control signal $u \in \mathcal{F}(\mathcal{U})$ such that $u(t) = u\big(t, x(t-\tau), w(t-\tau)\big)$ and $x(T) \in \mathcal{G}$.
\end{definition}

Generalizing works \citep{Hajek, ECC} we introduce the family of sets $\mathcal{P}_t := B\mathcal{U} \ominus (- e^{At} C\mathcal{W})$ for all $t \geq 0$ with $e^{A t} C\mathcal{W} := \big\{ e^{A t} Cw : w \in \mathcal{W} \big\}$. We will show that $\mathcal{P}_\tau$ is the set of actual control inputs of system~\eqref{eq: split delayed}, when $u$ has canceled any undesirable input $w$ with a delay $\tau$. Set $\mathcal{P}_\tau$ is the time delayed extension of $\mathcal{P}$ from \eqref{eq:Hajek ODE}.

\begin{definition}\label{def: T_c}
    The \emph{minimal correction time} $T_c$ represents the minimal time after which any undesirable input can be counteracted, $T_c := \inf\big\{ t \geq \tau : \mathcal{P}_t \neq \emptyset \big\}$.
\end{definition}

If $T_c = +\infty$, then the impact on the state of some undesirable inputs cannot be canceled by any control input after the actuation delay, i.e., there exists some $w \in \mathcal{W}$ such that $-e^{At}Cw \notin B\mathcal{U}$ for all $t \geq \tau$. In this case, resilient reachability is impossible \citep{ECC_extended}. Let us now assume that $T_c$ is finite in order to build the theory for resilience in the presence of actuation delay.

\subsection{Resilient reachability despite actuation delay}\label{subsec: delayed res reach}

We want to know whether a target set $\mathcal{G} \subseteq \mathbb{R}^n$ is resiliently reachable by system~\eqref{eq: split delayed}. Because of the actuation delay $\tau$, the controller can only guarantee that $x(t)$ is in some neighborhood of $x(t-\tau)$, it cannot ensure an exact location. Then, set $\mathcal{G}$ needs a minimal radius $\rho > 0$ to be resiliently reachable. 
Inspired by H\'ajek's approach \citep{Hajek}, we introduce system~\eqref{eq: Hajek} as a counterpart to system~\eqref{eq: split delayed}, just like system~\eqref{eq:Hajek ODE} was the counterpart of system~\eqref{eq:split ODE rotation} for the spacecraft without actuation delay. \vspace{-3mm}
\begin{equation}\label{eq: Hajek}
    \dot x(t) = Ax(t) + p(t), \quad x(0) = e^{A T_c} x_0, \quad p(t) \in \mathcal{P}_{T_c} := B\mathcal{U} \ominus (- e^{A T_c} C\mathcal{W}).
\end{equation}
Note that the input $p$ of system~\eqref{eq: Hajek} does not suffer from actuation delay by design of $\mathcal{P}_{T_c}$.

\begin{theorem}\label{thm: delayed res reach}
     If there exists $x_g \in \mathcal{G}$ such that $\mathbb{B}(x_g, \rho) \subseteq \mathcal{G}$ and $x_g$ is reachable in a finite time $T$ by system~\eqref{eq: Hajek}, then $\mathcal{G}$ is resiliently reachable by system~\eqref{eq: split delayed} in time $T + T_c$, with $\rho := \frac{c}{\mu(A)}\big( e^{\mu(A)T_c} - 1 \big)$ and $c := \max\big\{ \|Cw\| : w \in \mathcal{W} \big\}$.
\end{theorem}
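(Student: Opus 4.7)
The plan is to prove the theorem with a two-phase control strategy inspired directly by the form of $\mathcal{P}_{T_c}$: during $[0, T_c]$ I would apply a blind fixed input, because the controller does not yet have actionable information to counteract $w$, and during $[T_c, T+T_c]$ I would apply a compensating law that makes the real state $x$ shadow the fictitious trajectory $z$ of system~\eqref{eq: Hajek} time-shifted by $T_c$.

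More concretely, I would fix an arbitrary $w \in \mathcal{F}(\mathcal{W})$ and set $u \equiv 0$ on $[0, T_c]$ (using $0 \in \mathcal{U}$ as in the motivating example). For $t \in [T_c, T+T_c]$, let $p \in \mathcal{F}(\mathcal{P}_{T_c})$ be any signal driving~\eqref{eq: Hajek} from $z(0) = e^{A T_c} x_0$ to $x_g$ at time $T$, and define the control by the algebraic identity
\begin{equation*}
    B u(t) \;=\; p(t - T_c) \,-\, e^{A T_c} C w(t - T_c).
\end{equation*}
The Minkowski-difference definition $\mathcal{P}_{T_c} = B \mathcal{U} \ominus (-e^{A T_c} C \mathcal{W})$ guarantees that the right-hand side lies in $B \mathcal{U}$ for every $w(t - T_c) \in \mathcal{W}$, so a selection $u(t) \in \mathcal{U}$ exists; since $T_c \geq \tau$, the value $w(t - T_c)$ is already available to the delayed controller at time $t$, meeting the causality requirement of Definition~\ref{def: linear res reach}.

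The core of the proof will be to show $\|x(T + T_c) - x_g\| \leq \rho$. Substituting the two-phase law into the variation-of-constants formula for~\eqref{eq: split delayed} and performing the change of variables $s \mapsto s - T_c$ on the compensator integrals, I expect the $p$-contribution to reassemble exactly into $z(t - T_c)$ while the three $Cw$-integrals (from the waiting phase, from the running disturbance, and from the compensator $-e^{A T_c} C w(\cdot - T_c)$) telescope to the clean identity
\begin{equation*}
    x(t) - z(t - T_c) \;=\; \int_{t - T_c}^{t} e^{A(t - s)} C w(s)\, ds \qquad \text{for all } t \in [T_c, T + T_c].
\end{equation*}
Evaluating at $t = T + T_c$ and using the logarithmic-norm bound $\|e^{A(t - s)}\| \leq e^{\mu(A)(t - s)}$ with $\|C w(s)\| \leq c$, I obtain $\|x(T + T_c) - z(T)\| \leq \tfrac{c}{\mu(A)}\bigl(e^{\mu(A) T_c} - 1\bigr) = \rho$, and since $z(T) = x_g$ and $\mathbb{B}(x_g, \rho) \subseteq \mathcal{G}$, the target set is indeed resiliently reached in time $T + T_c$.

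I expect the main obstacle to be the $Cw$-bookkeeping that collapses the three separate contributions into a single tail integral of length exactly $T_c$; the change of variables must line up precisely so that the compensator exactly offsets the running disturbance over $[0, t - T_c]$ and leaves only the trailing window $[t - T_c, t]$ uncompensated. A secondary subtlety is what to do when $0 \notin \mathcal{U}$: then I would pick any $u_0 \in \mathcal{U}$ for the waiting phase and absorb the deterministic drift $\int_0^{T_c} e^{A(T_c - s)} B u_0\, ds$ into the initial condition of the fictitious system, since this drift is independent of $w$ and does not affect the structure of the cancellation argument.
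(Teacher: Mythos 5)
Your proposal is correct and follows essentially the same route as the paper's proof: the identical two-phase law ($Bu \equiv 0$ on $[0,T_c]$, then $Bu(t) = p(t-T_c) - e^{AT_c}Cw(t-T_c)$), the same cancellation leaving only the trailing disturbance window $\int_{t-T_c}^{t} e^{A(t-s)}Cw(s)\,ds$, and the same logarithmic-norm bound yielding $\rho$. Your claimed telescoping identity does check out under the variation-of-constants computation, so the core of the argument is sound as planned.
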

\begin{proof}
    We first introduce the log-norm of matrix $A$ defined in \citep{exp} as $\mu(A) := \max \big\{ \Lambda((A+A^\top)/2) \big\}$. Then, $\|e^{At}\| \leq e^{\mu(A)t}$ for all $t \geq 0$ according to. 
    Since $T$ is the time at which system~\eqref{eq: Hajek} can reach $x_g$ from $e^{A T_c} x_0$, there exists $p(s) \in \mathcal{P}_{T_c}$ for all $s \in [0, T]$ such that 
    \begin{equation*}
        x_g = x(T) = e^{AT} \left( e^{AT_c} x_0 + \int_0^{T} \hspace{-2mm} e^{-As} p(s)\, ds \right), \quad \text{i.e.,} \quad e^{AT} \int_0^{T} \hspace{-2mm} e^{-As} p(s)\, ds = x_g - e^{A(T+T_c)}x_0.
    \end{equation*}
    
    Let $w \in \mathcal{F}(\mathcal{W})$ be some undesirable input affecting system~\eqref{eq: split delayed}. We now define the corresponding control input $u \in \mathcal{F}(\mathcal{U})$ so that it satisfies: $Bu(t) = 0$ for $t \in [0, T_c]$ and $Bu(t) = p(t-T_c) - e^{AT_c}Cw(t-T_c)$ for $t \in [T_c,\, T + T_c]$. Note that $u(t) \in \mathcal{U}$ by definition of $p(t) \in \mathcal{P}_{T_c}$. We apply this control law to system~\eqref{eq: split delayed}:
    \begin{align*}
        x(T+T_c) &= e^{A(T+T_c)}\left( x_0 + \int_0^{T+T_c} \hspace{-2mm} e^{-At} C w(t)\, dt + \int_{0}^{T+T_c} \hspace{-2mm} e^{-At} Bu(t) dt \right)\\
        &= e^{A(T+T_c)}\left( x_0 + \int_0^{T+T_c} \hspace{-2mm} e^{-At} C w(t)\, dt + \int_{T_c}^{T+T_c} \hspace{-2mm} e^{-At} \big( p(t-T_c) - e^{AT_c} Cw(t-T_c)\big) dt \right)\\
        &= e^{A(T+T_c)}\left( x_0 + \int_0^{T+T_c} \hspace{-2mm} e^{-At} C w(t)\, dt + \int_{0}^{T} \hspace{-2mm} e^{-A(s+T_c)} \big( p(s) - e^{AT_c}Cw(s)\big) ds \right)\\
        &= e^{A(T+T_c)}\left( x_0 + \int_{T}^{T+T_c} \hspace{-2mm} e^{-At} Cw(t) dt \right) + e^{AT}\int_0^{T} \hspace{-2mm} e^{-As} p(s) ds \\
        &= e^{A(T+T_c)} x_0 + \int_{0}^{T_c} \hspace{-2mm} e^{As} Cw(T+T_c-s) ds + \big( x_g - e^{A(T+T_c)}x_0 \big) \\
        &= x_g + \int_{0}^{T_c} \hspace{-2mm} e^{As} Cw(T+T_c-s) ds,
    \end{align*}
    thanks to the definition of $p(s) \in \mathcal{P}_{T_c}$. Then, by subtracting $x_g$ and using the triangle inequality we obtain
    \begin{equation*}
        \big\|x(T+T_c) - x_g \big\| \leq \int_{0}^{T_c} \hspace{-1mm} \big\|e^{As}\big\|\, \big\|Cw(T+T_c-s)\big\| ds \leq \int_{0}^{T_c} \hspace{-2mm} e^{\mu(A)s} c\,ds = \frac{c}{\mu(A)}\big( e^{\mu(A)T_c} - 1 \big) = \rho.
    \end{equation*}
    Since $\mathbb{B}(x_g, \rho) \subseteq \mathcal{G}$, we have $x(T+T_c) \in \mathcal{G}$. Hence, $\mathcal{G}$ is resiliently reachable by system~\eqref{eq: split delayed} in time $T+T_c$.
\end{proof}

Note that the control $p(t-T_c)$ responsible for steering to $x_g$ in Theorem~\ref{thm: delayed res reach} is in fact an \textit{open loop} control. A \textit{feedback control} may perform better in practice, but the saturation enforcing that such signal remains bounded in $\mathcal{P}_{T_c}$ leads to a substantial increase in complexity, as demonstrated in Section~\ref{subsec: feedback}.

Theorem~\ref{thm: delayed res reach} provides a sufficient resilient reachability condition for delayed system~\eqref{eq: split delayed} in terms of the reachability of $\mathcal{G}$ by system~\eqref{eq: Hajek}. In turn, a sufficient condition for this last property can be verified with the lemma below.

\begin{lemma}\label{lemma: controllability}
    Let $P \in \mathbb{R}^{n \times d}$ be a matrix whose columns are $d$ linearly independent vectors of $\mathcal{P}_{T_c}$ with $d := \dim(\mathcal{P}_{T_c}) \geq 0$.
    If $0 \in \interior(\mathcal{P}_{T_c})$, then system~\eqref{eq: Hajek} is controllable if and only if $Re(\Lambda(A)) = 0$, $\rank(\mathcal{C}(A,P)) = n$, and there is no real eigenvector $v$ of $A^\top$ satisfying $v^\top p \leq 0$ for all $p \in \mathcal{P}_{T_c}$.
\end{lemma}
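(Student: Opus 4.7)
\medskip

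\noindent\textbf{Proof plan.} The strategy is to reduce the statement to a direct application of Brammer's Corollary 3.7 by reformulating system~\eqref{eq: Hajek} as an equivalent standard linear control system whose input takes values in a full-dimensional convex set containing $0$ in its interior. The same trick was already used in the proof of Proposition~\ref{prop:resilience} (with $\hat{B} = \left[\begin{smallmatrix} 0_{2\times 2} \\ I_2 \end{smallmatrix}\right]$ replacing the 2-dimensional set $\mathcal{P}_b \subset \mathbb{R}^4$), so the present lemma is essentially its abstract generalization.

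\medskip

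\noindent First, I would introduce the projected input set
\[
\mathcal{Y} := \bigl\{\, y \in \mathbb{R}^d \,:\, Py \in \mathcal{P}_{T_c} \,\bigr\}.
\]
Since the $d$ columns of $P$ are linearly independent and span $\Span(\mathcal{P}_{T_c})$, and since $\mathcal{P}_{T_c} \subseteq \Span(\mathcal{P}_{T_c})$, the map $y \mapsto Py$ is a linear bijection from $\mathcal{Y}$ onto $\mathcal{P}_{T_c}$; its inverse is $p \mapsto (P^\top P)^{-1} P^\top p$. From the convexity, compactness and (relative) interior hypothesis on $\mathcal{P}_{T_c}$, I would deduce that $\mathcal{Y}$ is convex, compact, and that $0 \in \interior_{\mathbb{R}^d}(\mathcal{Y})$. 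Hence system~\eqref{eq: Hajek} is controllable if and only if the reformulated system
\[
\dot x(t) = A x(t) + P y(t), \qquad y(t) \in \mathcal{Y},
\]
is controllable, the equivalence being witnessed by the identification $p(t) = P y(t)$ for almost every $t$.

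\medskip

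\noindent Next, I would apply Brammer's Corollary 3.7 to the reformulated system with control matrix $P \in \mathbb{R}^{n\times d}$ and input set $\mathcal{Y} \subset \mathbb{R}^d$. The corollary yields that controllability is equivalent to the three conditions: (i) the controllability matrix $\mathcal{C}(A,P)$ has rank $n$; (ii) every eigenvalue of $A$ has zero real part, i.e., $Re(\Lambda(A)) = 0$; (iii) there is no real eigenvector $v$ of $A^\top$ satisfying $v^\top P y \leq 0$ for all $y \in \mathcal{Y}$. To translate condition (iii) back to the lemma's wording, I would use the bijection $y \leftrightarrow p = Py$ to rewrite $\{Py : y \in \mathcal{Y}\} = \mathcal{P}_{T_c}$, so that (iii) becomes exactly: no real eigenvector $v$ of $A^\top$ satisfies $v^\top p \leq 0$ for all $p \in \mathcal{P}_{T_c}$. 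Combined with (i) and (ii) this gives both implications of the lemma.

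\medskip

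\noindent The main obstacle is bookkeeping rather than a deep difficulty: one must carefully verify that the reformulation $p(t) = P y(t)$ preserves measurability and the input constraint in both directions, and that ``$0 \in \interior(\mathcal{P}_{T_c})$'' effectively translates into $0$ lying in the $\mathbb{R}^d$-interior of $\mathcal{Y}$ as required by Brammer's hypothesis (since $\mathcal{P}_{T_c}$ may have empty interior in the ambient $\mathbb{R}^n$ when $d < n$, the hypothesis has to be read relative to $\Span(\mathcal{P}_{T_c})$, mirroring the treatment of $\mathcal{P}_b$ in Proposition~\ref{prop:resilience}). Once that identification is in place, the verification of Brammer's three conditions is a direct correspondence with the three conditions in the lemma, and the degenerate case $d = 0$ (where $\mathcal{P}_{T_c} = \{0\}$) makes the system trivially uncontrollable, consistent with the rank condition failing.
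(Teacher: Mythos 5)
Your proposal is correct and follows essentially the same route as the paper: both reduce the statement to Corollary 3.7 of \citep{Brammer}, using the matrix $P$ (with $\Image(P) = \Span(\mathcal{P}_{T_c})$) as the control matrix and checking that convexity of $\mathcal{P}_{T_c}$ and the interior hypothesis supply Brammer's remaining conditions. The only difference is one of explicitness — you spell out the input-space bijection $y \leftrightarrow Py$ between $\mathcal{Y} \subset \mathbb{R}^d$ and $\mathcal{P}_{T_c}$, which the paper's terser proof leaves implicit when it declares $\mathcal{C}(A,P)$ to be ``a controllability matrix associated with system~\eqref{eq: Hajek}.''
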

\begin{proof}
    By construction, $\Image(P) = \Span(\mathcal{P}_{T_c})$ so that $\mathcal{C}(A,P)$ is a controllability matrix associated with system~\eqref{eq: Hajek}.
    Since $\mathcal{U}$ and $\mathcal{W}$ are convex, so are $B\mathcal{U}$ and $e^{A T_c} C\mathcal{W}$. Their Minkowski difference $\mathcal{P}_{T_c}$ is then also convex \citep{Pontryagin_difference}. Then, $0 \in \interior(\co(\mathcal{P}_{T_c}))$ and trivially $0 \in \Ker(I_n) \cap \mathcal{P}_{T_c}$. These two inclusions allow us to apply Corollary 3.7 of \citep{Brammer} which yields the controllability condition.
\end{proof}

Thus, combining Lemma~\ref{lemma: controllability} and Theorem~\ref{thm: delayed res reach} provides a sufficient condition for resilient reachability in the presence of actuation delay. We will now investigate the more complicated problem of resilient trajectory tracking despite actuation delay.

\subsection{Resilient trajectory tracking despite actuation delay}\label{subsec: delayed res traj tracking}

We want system~\eqref{eq: split delayed} to track the actuated reference trajectory $\mathcal{T}_\text{ref}$ designed for system~\eqref{eq: Hajek} by $\mathcal{T}_\text{ref} := \big\{ x_\text{ref}(t) : \dot x_\text{ref}(t) = A x_\text{ref}(t) + p_\text{ref}(t), \ \text{for all}\ t \geq 0 \big\}$ with $p_\text{ref} \in \mathcal{F}(\mathcal{P}_\text{ref})$. We also define $\mathcal{P}_\varepsilon$ as a compact set of $\mathbb{R}^n$ satisfying $0 \in \relint(\mathcal{P}_\varepsilon)$ and $\dim(\mathcal{P}_\varepsilon) = n$. As in Section~\ref{subsec: tracking no delay}, we use input set $\mathcal{P}_\varepsilon$ to counteract the error arising from $x_0 \neq x_\text{ref}(0)$ through the dynamics
\begin{equation}\label{eq: Hajek epsilon}
    \dot y(t) = Ay(t) + p_\varepsilon(t), \quad y(0) = e^{AT_c}\big( x_0 - x_\text{ref}(0) \big), \quad p_\varepsilon(t) \in \mathcal{P}_\varepsilon.
\end{equation}
We can then state our resilient trajectory tracking result.
\begin{theorem}\label{thm: delayed res traj track}
    If $\mathcal{P}_\varepsilon \oplus \mathcal{P}_\text{ref} \subseteq \mathcal{P}_{T_c}$ and system~\eqref{eq: Hajek epsilon} is stabilizable in a finite time $t_f$, then the reference trajectory $\mathcal{T}_\text{ref}$ can be tracked by system~\eqref{eq: split delayed} with a precision $\rho$ after time $t_f + T_c$, i.e., $\|x(T) - x_\text{ref}(T)\| \leq \rho$ for all $T \geq t_f + T_c$.
\end{theorem}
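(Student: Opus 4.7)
The plan is to mirror the constructive approach of Proposition~\ref{prop: res traj track rotated} and Theorem~\ref{thm: delayed res reach}, with a time-shifted stabilizing input that accounts for the actuation delay. First, I invoke stabilizability of \eqref{eq: Hajek epsilon} to obtain $p_\varepsilon \in \mathcal{F}(\mathcal{P}_\varepsilon)$ on $[0, t_f]$ driving $y(t_f) = 0$, extended by $p_\varepsilon(s) = 0$ for $s > t_f$ using $0 \in \mathcal{P}_\varepsilon$. Integrating \eqref{eq: Hajek epsilon} at $t_f$ produces the key stabilization identity $\int_0^{t_f} e^{-As} p_\varepsilon(s)\, ds = -e^{AT_c}(x_0 - x_\text{ref}(0))$. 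Fix any undesirable signal $w \in \mathcal{F}(\mathcal{W})$ and define the controller $u \in \mathcal{F}(\mathcal{U})$ via
\begin{equation*}
    Bu(t) = \begin{cases} p_\text{ref}(t), & t \in [0, T_c], \\ p_\text{ref}(t) + p_\varepsilon(t-T_c) - e^{AT_c} C w(t-T_c), & t \geq T_c. \end{cases}
\end{equation*}
The assumption $\mathcal{P}_\varepsilon \oplus \mathcal{P}_\text{ref} \subseteq \mathcal{P}_{T_c}$ ensures that $p_\text{ref}(t) + p_\varepsilon(t-T_c) \in \mathcal{P}_{T_c}$, so the right-hand side lies in $B\mathcal{U}$ for every $w(t-T_c) \in \mathcal{W}$, making the controller admissible; causality holds since $T_c \geq \tau$.

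Next, I apply variation of parameters to \eqref{eq: split delayed} at time $T \geq t_f + T_c$ and split the integrals into three contributions. The $w$-contribution, after the substitution $s = t - T_c$ in $\int_{T_c}^T e^{A(T-t+T_c)} C w(t-T_c)\, dt$, telescopes to the residual $\int_{T-T_c}^T e^{A(T-t)} C w(t)\, dt$, matching the irreducible tail already appearing in Theorem~\ref{thm: delayed res reach}. The $p_\varepsilon$-contribution, via the same change of variable, becomes $e^{A(T-T_c)} \int_0^{t_f} e^{-As} p_\varepsilon(s)\, ds$ since $p_\varepsilon$ vanishes past $t_f$; by the stabilization identity above this reduces to $-e^{AT}(x_0 - x_\text{ref}(0))$ and cancels the propagated initial mismatch. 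Finally, the $p_\text{ref}$-contribution reassembles as $\int_0^T e^{A(T-t)} p_\text{ref}(t)\, dt = x_\text{ref}(T) - e^{AT} x_\text{ref}(0)$ by the variation-of-parameters formula applied to $\mathcal{T}_\text{ref}$.

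Collecting these three contributions together with the initial term $e^{AT} x_0$ yields $x(T) = x_\text{ref}(T) + \int_{T-T_c}^T e^{A(T-t)} C w(t)\, dt$. Bounding the residual with $\|e^{A(T-t)}\| \leq e^{\mu(A)(T-t)}$ and $\|C w(t)\| \leq c$, followed by the substitution $r = T - t$, gives $\|x(T) - x_\text{ref}(T)\| \leq \int_0^{T_c} e^{\mu(A) r} c\, dr = \rho$, as claimed.

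The main conceptual obstacle is aligning two clocks: stabilization happens in the \emph{$y$-time} of \eqref{eq: Hajek epsilon} while tracking happens in real time. Feeding the time-shifted signal $p_\varepsilon(t-T_c)$ into the plant is what makes the pre-propagated initial condition $y(0) = e^{AT_c}(x_0 - x_\text{ref}(0))$ produce cancellation of $e^{AT}(x_0 - x_\text{ref}(0))$ rather than a spurious $e^{A(T+T_c)}(x_0 - x_\text{ref}(0))$; similarly, applying $p_\text{ref}$ already during $[0, T_c]$ (unlike the zero input used in Theorem~\ref{thm: delayed res reach}) is what prevents an extraneous $\int_0^{T_c} e^{A(T-t)} p_\text{ref}(t)\, dt$ term from polluting the tracking error. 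Once this temporal bookkeeping is in place, the remaining manipulations are routine change-of-variable identities parallel to those in the proof of Theorem~\ref{thm: delayed res reach}.
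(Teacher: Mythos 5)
Your proof is correct and takes essentially the same approach as the paper's: the same delayed control law $Bu(t) = p_\text{ref}(t) + p_\varepsilon(t-T_c) - e^{AT_c}Cw(t-T_c)$ for $t \geq T_c$ (with $Bu(t) = p_\text{ref}(t)$ before), the same variation-of-parameters decomposition, and the same log-norm bound on the residual integral $\int_{T-T_c}^{T} e^{A(T-t)}Cw(t)\,dt$. The only cosmetic difference is that you substitute the stabilization identity $\int_0^{t_f} e^{-As}p_\varepsilon(s)\,ds = -e^{AT_c}\bigl(x_0 - x_\text{ref}(0)\bigr)$ up front to cancel the initial mismatch, whereas the paper recognizes the corresponding term as $y(T-T_c) = 0$ at the end; the two are equivalent.
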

\begin{proof}
    Since system~\eqref{eq: Hajek epsilon} is stabilizable in a finite time $t_f$, there exists a signal $p_\varepsilon(t) \in \mathcal{P}_\varepsilon$ for all $t \in [0, t_f]$ yielding $y(t_f) = 0$. Because $0 \in \mathcal{P}_\varepsilon$, we can extend signal $p_\varepsilon$ with $p_\varepsilon(t) = 0$ for all $t > t_f$. 
    Let $w \in \mathcal{F}(\mathcal{W})$ be the undesirable input signal. Since $\mathcal{P}_\varepsilon \oplus \mathcal{P}_\text{ref} \subseteq \mathcal{P}_{T_c} = B\mathcal{U} \ominus (-e^{AT_c}C\mathcal{W})$, there exists $u \in \mathcal{F}(\mathcal{U})$ such that $Bu(t) = p_\text{ref}(t) + p_\varepsilon(t-T_c) - e^{A T_c} Cw(t-T_c)$ for $t \geq T_c$.
    Indeed, the reference trajectory is known ahead of time, so $u(t)$ has access to $p_\text{ref}(t)$. We define $u$ to satisfy $Bu(t) = p_\text{ref}(t)$ for $t \in [0, T_c]$.
    We now implement this controller for $T \geq T_c$ in system~\eqref{eq: split delayed}
    \begin{align*}
        x(T) &= e^{AT}\left(x_0 + \int_0^{T_c} \hspace{-3mm} e^{-At} Bu(t)\, dt + \int_{T_c}^T \hspace{-3mm} e^{-At} Bu(t)\, dt +  \int_0^T \hspace{-3mm} e^{-At} Cw(t)\, dt \right) \\
        &= e^{AT} \hspace{-1mm} \left( \hspace{-1mm} x_0 \hspace{-1mm} + \hspace{-1mm} \int_0^{T_c} \hspace{-4mm} e^{-At} p_\text{ref}(t) dt + \hspace{-1mm} \int_{T_c}^T \hspace{-3mm} e^{-At} \big( p_\text{ref}(t) + p_\varepsilon(t-T_c) - e^{AT_c}Cw(t-T_c) \big) dt + \hspace{-1mm} \int_0^T \hspace{-3mm} e^{-At} Cw(t) dt \hspace{-1mm}  \right) \\
        &= e^{AT}\left(x_0 + \int_0^T \hspace{-3mm} e^{-At} p_\text{ref}(t)\, dt + \int_0^{T-T_c} \hspace{-3mm} e^{-As}\big(e^{-A T_c}p_\varepsilon(s) - Cw(s)\big) ds +  \int_0^T \hspace{-3mm} e^{-At} Cw(t)\, dt \right) \\
        &= e^{AT}\left(x_0 + e^{-AT} x_\text{ref}(T) - x_\text{ref}(0) + e^{-A T_c} \int_0^{T-T_c} \hspace{-3mm} e^{-As} p_\varepsilon(s)\, ds +  \int_{T-T_c}^T \hspace{-3mm} e^{-At} Cw(t)\, dt \right),
    \end{align*}
    by definition of $p_\text{ref}$. Then,
    \begin{equation}\label{eq: tracking}
        x(T) - x_\text{ref}(T) = e^{A(T-T_c)}\left( e^{AT_c}\big(x_0 - x_\text{ref}(0)\big) + \int_0^{T-T_c} \hspace{-3mm} e^{-As} p_\varepsilon(s)\, ds \right) +  \int_{0}^{T_c} \hspace{-3mm} e^{As} Cw(T-s)\, ds.
    \end{equation}
    Note that the last integral term is the same as in Theorem~\ref{thm: delayed res reach} and hence can be bounded similarly:
    \begin{equation*}
        \left\| \int_{0}^{T_c} \hspace{-3mm} e^{As} Cw(T-s)\, ds \right\| \leq \int_{0}^{T_c} \hspace{-1mm} \big\|e^{As} \big\|\, \big\|Cw(T-s) \big\| \, ds \leq c \int_0^{T_c} \hspace{-3mm} e^{\mu(A)s}ds = \rho.
    \end{equation*}
    Since $p_\varepsilon$ stabilizes system~\eqref{eq: Hajek epsilon} in time $t_f$, we have $y(T) = 0$ for all $T \geq t_f$. In particular, for $T \geq t_f + T_c$ we obtain $y(T-T_c) = 0 = e^{A(T-T_c)} \left( y(0) + \int_0^{T-T_c} e^{-As} p_\varepsilon(s)\, ds \right)$. Note that $y(T-T_c)$ is exactly the central term in \eqref{eq: tracking}, which finally yields $\big\| x(T) - x_\text{ref}(T) \big\| \leq \rho$.
\end{proof}

Without control signal $p_\varepsilon$, the tracking error would be $\|x(t) - x_\text{ref}(t)\| \leq \rho + \big\|e^{At}\big( x_0 - x_\text{ref}(0) \big)\big\|$, which can grow exponentially if $x_0 - x_\text{ref}(0)$ is collinear with a positive eigenvector of $A$. When $x_0 = x_\text{ref}(0)$, we do not need $p_\varepsilon$ and the tracking can be performed with precision $\rho$ from time $T_c$ onward. 

Theorem~\ref{thm: delayed res traj track} provides a sufficient condition for resilient trajectory tracking by delayed system~\eqref{eq: split delayed} in terms of the finite time stabilizability of system~\eqref{eq: Hajek epsilon}. In turn, this property can be verified with the lemma below.

\begin{lemma}\label{lemma: stabilizability}
    Let $P_\varepsilon \in \mathbb{R}^{n \times d}$ be a matrix whose columns are $d$ linearly independent vectors of $\mathcal{P}_\varepsilon$, with $d = \dim(\mathcal{P}_{T_c}) = \dim(\mathcal{P}_\varepsilon)$.
    System~\eqref{eq: Hajek epsilon} is stabilizable in a finite time if and only if $Re(\Lambda(A))\leq 0$, $\rank(\mathcal{C}(A,P_\varepsilon)) = n$ and there is no real eigenvector $v$ of $A^\top$ satisfying $v^\top p \leq 0$ for all $p \in \mathcal{P}_\varepsilon$.
\end{lemma}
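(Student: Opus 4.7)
The plan is to apply Corollary 3.6 of \citep{Brammer}, which characterizes finite-time null-controllability (equivalently, finite-time stabilizability) of a linear system with constrained inputs. This mirrors the proof of Lemma~\ref{lemma: controllability}, where the controllability counterpart Corollary 3.7 of \citep{Brammer} was invoked, and the same tool was already used in Proposition~\ref{prop: res traj track rotated}.

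First, I would reformulate system~\eqref{eq: Hajek epsilon} so that its input matrix is explicitly $P_\varepsilon$. Since the definition in Section~\ref{subsec: delayed res traj tracking} imposes $\dim(\mathcal{P}_\varepsilon) = n$, the matrix $P_\varepsilon \in \mathbb{R}^{n \times n}$ is invertible, and the substitution $p_\varepsilon = P_\varepsilon q$ rewrites system~\eqref{eq: Hajek epsilon} as the equivalent dynamics $\dot y(t) = A y(t) + P_\varepsilon q(t)$ with $q(t) \in \tilde{\mathcal{P}}_\varepsilon := P_\varepsilon^{-1}\mathcal{P}_\varepsilon$. This puts the system in the canonical form required by \citep{Brammer}.

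Next, I would verify the four hypotheses of Corollary 3.6 of \citep{Brammer} for the reformulated system. The condition $0 \in \interior(\co(\tilde{\mathcal{P}}_\varepsilon))$ follows from $0 \in \relint(\mathcal{P}_\varepsilon)$ together with $\dim(\mathcal{P}_\varepsilon) = n$ and the fact that $P_\varepsilon^{-1}$ is a linear bijection preserving convex hulls and interiors. The rank and spectral assumptions $\rank(\mathcal{C}(A, P_\varepsilon)) = n$ and $Re(\Lambda(A)) \leq 0$ are part of the lemma's hypotheses. Finally, the eigenvector condition transfers through the identity $v^\top (P_\varepsilon q) = v^\top p$, since $p = P_\varepsilon q$ ranges over $\mathcal{P}_\varepsilon$ exactly when $q$ ranges over $\tilde{\mathcal{P}}_\varepsilon$. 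Corollary 3.6 of \citep{Brammer} then yields the iff characterization of finite-time stabilizability for the reformulated system, which immediately transfers back to system~\eqref{eq: Hajek epsilon} through the linear change of input variables.

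The proof is essentially bookkeeping once the substitution $p_\varepsilon = P_\varepsilon q$ is made; the only care needed is in confirming that each of the relative-interior, convex-hull, rank, and half-space conditions is preserved under the linear bijection $P_\varepsilon$, which is straightforward because $P_\varepsilon$ is square and invertible. Had the statement allowed $\dim(\mathcal{P}_\varepsilon) < n$, a more delicate invariant-subspace reduction would be required, but the standing assumption $\dim(\mathcal{P}_\varepsilon) = n$ keeps the argument elementary and removes this as a real obstacle.
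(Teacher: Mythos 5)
Your proposal is correct, but it takes a different route from the paper. The paper's own proof of Lemma~\ref{lemma: stabilizability} is essentially a citation: it notes that $\Image(P_\varepsilon) = \Span(\mathcal{P}_\varepsilon)$, so that $\mathcal{C}(A,P_\varepsilon)$ is a controllability matrix associated with system~\eqref{eq: Hajek epsilon}, and then invokes Proposition~2 of \citep{ECC_extended} — a result from the authors' prior work that packages the Brammer-type conditions for input sets satisfying only $0 \in \relint(\mathcal{P}_\varepsilon)$, i.e., without requiring $\mathcal{P}_\varepsilon$ to be full-dimensional or $P_\varepsilon$ to be square and invertible. You instead reduce directly to Corollary 3.6 of \citep{Brammer} through the change of input variables $p_\varepsilon = P_\varepsilon q$, which mirrors how the paper proves the companion controllability result (Lemma~\ref{lemma: controllability}, via Corollary 3.7 of \citep{Brammer}) and the stabilizability claim in Proposition~\ref{prop: res traj track rotated}. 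Your bookkeeping is sound: the zero input, the interior of the convex hull, the rank condition, and the eigenvector half-space condition all transfer through the linear bijection $P_\varepsilon^{-1}$, and since Brammer's Corollary 3.6 is an equivalence, the ``if and only if'' survives the transfer in both directions. The trade-off between the two routes is exactly the dimensionality issue you flag at the end: your argument is self-contained modulo \citep{Brammer} but hinges on $d = n$, which is legitimate here because the standing definition of $\mathcal{P}_\varepsilon$ in Section~\ref{subsec: delayed res traj tracking} imposes $\dim(\mathcal{P}_\varepsilon) = n$; the paper's appeal to Proposition~2 of \citep{ECC_extended} instead covers the degenerate case $d = \dim(\mathcal{P}_{T_c}) = \dim(\mathcal{P}_\varepsilon) < n$ that the lemma's rectangular-$P_\varepsilon$ phrasing (and its use of $\relint$ rather than $\interior$) appears designed to accommodate — in that case your proof would indeed require the invariant-subspace reduction you mention, which is precisely what the cited proposition absorbs.
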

\begin{proof}
    By construction $\Image(P_\varepsilon) = \Span(\mathcal{P}_\varepsilon)$, so that $\mathcal{C}(A,P_\varepsilon)$ is a controllability matrix associated with system~\eqref{eq: Hajek epsilon}. We made the assumption that $0 \in \relint(\mathcal{P}_\varepsilon)$, so we can apply Proposition~2 of \citep{ECC_extended} which guarantees that system~\eqref{eq: Hajek epsilon} is stabilizable in finite time.
\end{proof}

Now that we have established sufficient conditions for resilient reachability and resilient trajectory tracking for linear systems in the presence of actuation delay, we will investigate how to adapt this theory to the spacecraft dynamics \eqref{eq:delayed split ODE rotation}.

\section{Spacecraft resilience in the presence of actuation delay}\label{sec:application}

In this section we extend the linear theory of Section~\ref{sec:theory delay} to the rotating dynamics~\eqref{eq:delayed split ODE rotation} to build an answer to Problem~\ref{prob:mission}. We start by verifying whether the open-loop controller of Section~\ref{subsec: delayed res traj tracking} can be applied to the spacecraft.

\subsection{Open-loop controller}\label{subsec: T_c too large}

To adapt the resilient trajectory tracking controller of Theorem~\ref{thm: delayed res traj track} to nonlinear system~\eqref{eq:delayed split ODE rotation}, we need to modify the minimal correction time $T_c$ introduced for linear systems in Definition~\ref{def: T_c}. However, we will show that such a modification is in fact impossible and prevents a straightforward extension of Theorem~\ref{thm: delayed res traj track} to nonlinear system~\eqref{eq:delayed split ODE rotation}.
If we were to adapt $T_c$ to system~\eqref{eq:delayed split ODE rotation}, we need to understand the reasoning behind Definition~\ref{def: T_c}, which comes from the calculations of Theorem~\ref{thm: delayed res traj track}, where $T_c$ allows $Bu(T)$ to counteract the effect of $w(T-T_c)$ on the state $x(T)$. We will then emulate the proof of Theorem~\ref{thm: delayed res traj track} for nonlinear dynamics~\eqref{eq:delayed split ODE rotation} in the vain hope of determining an updated $T_c$.

The reference trajectory is $\mathcal{T}_\text{ref} = \big\{X_\text{ref}(t) : \dot X_\text{ref}(t) = A X_\text{ref}(t) + r p_\text{ref}(t)$ for all $t \geq 0 \big\}$, $p_\text{ref} \in \mathcal{F}(\mathcal{P}_\text{ref})$ with $\mathcal{P}_\text{ref} = \mathbb{B}^2(0, \rho_\text{ref})$.
Inspired by Theorem~\ref{thm: delayed res traj track}, for $w \in \mathcal{F}(\mathcal{W})$ we define the control law $Bu(t) = R_\theta^{-1}(t) p_\text{ref}(t)$ for $t \in [0, T_c]$ and $Bu(t) = R_\theta^{-1}(t) p_\text{ref}(t) + p_\varepsilon(t-T_c) + Bu_w(t-T_c)$ for $t \geq T_c$, where $p_\varepsilon$ should counteract the error arising from $X_0 - X_\text{ref}(0)$ and $Bu_w$ should cancel the effect of $w$.
We apply this control law to the spacecraft dynamics~\eqref{eq:delayed split ODE rotation} for $T \geq T_c$:
\begin{align*}
    X(T) &= e^{AT}\left( X_0 + \int_0^T \hspace{-3mm} e^{-At}r R_\theta(t) \big( Bu(t) + Cw(t) \big) dt \right) \\
    &= e^{AT}\hspace{-1mm} \left( \hspace{-1mm} X_0 + \hspace{-1mm} \int_0^T \hspace{-3mm} e^{-At}r p_\text{ref}(t)dt + \hspace{-1mm} \int_{T_c}^T \hspace{-3mm} e^{-At}r R_\theta(t) \big( p_\varepsilon(t-T_c) + Bu_w(t-T_c) \big)dt + \hspace{-1mm} \int_0^T \hspace{-3mm} e^{-At}r R_\theta(t) Cw(t) dt \hspace{-1mm} \right) \\
    &= e^{AT}\left( X_0 + e^{-AT} X_\text{ref}(T) - X_\text{ref}(0) + \int_0^{T-T_c} \hspace{-4mm} e^{-A(t+T_c)}r R_\theta(t+T_c) \big( p_\varepsilon(t) + Bu_w(t) \big)dt \right) \\
    & \quad + \int_0^T \hspace{-3mm} e^{A(T-t)}r R_\theta(t) Cw(t) dt  \\
    &= X_\text{ref}(T) + e^{A(T-T_c)}\left( e^{AT_c} \big(X_0 - X_\text{ref}(0)\big) + \int_0^{T-T_c} \hspace{-3mm} e^{-A(t+T_c)}r R_\theta(t+T_c)p_\varepsilon(t)\right) \\
    & \quad+ \int_{T-T_c}^T \hspace{-3mm} e^{A(T-t)}r R_\theta(t) Cw(t) dt + e^{AT} \int_0^{T-T_c} \hspace{-2mm} e^{-At}r \big[ R_\theta(t)Cw(t) + e^{-AT_c} R_\theta(t+T_c)Bu_w(t) \big] dt.
\end{align*}
The last term in square brackets is the one to be canceled by the appropriate choice of $Bu_w$ and leads to the updated definition
\begin{equation}\label{eq: T_c nonlinear}
    T_c = \inf \big\{ T \geq \tau : -R_\theta^{-1}(t+T)e^{AT}R_\theta(t)C\mathcal{W} \subseteq B\mathcal{U} \ \text{for all}\ t \geq 0 \big\}.
\end{equation}
However, this definition of $T_c$ creates a circular dependency on $R_\theta(t)$, which depends on state $X(t)$, which is in turn modified by controller $u$ which relies on $T_c$. Then, \eqref{eq: T_c nonlinear} properly defines $T_c$ only if it is invariant with respect to $\theta(t)$. Let us investigate this question by calculating $R_\theta^{-1}(t+T)e^{AT}R_\theta(t)C$.

We first calculate
\begin{equation*}
    A = \begin{bmatrix} 0 & 0 & 1 & 0 \\ 0 & 0 & 0 & 1 \\ 3\Omega^2 & 0 & 0 & 2\Omega \\ 0 & 0 & -2\Omega & 0 \end{bmatrix} \hspace{-1mm}, \quad A^2 = \begin{bmatrix} 3\Omega^2 & 0 & 0 & 2\Omega \\ 0 & 0 & -2\Omega & 0 \\ 0 & 0 & -\Omega^2 & 0 \\ -6\Omega^3 & 0 & 0 & -4\Omega^2 \end{bmatrix} \hspace{-1mm}, \quad A^3 = \begin{bmatrix} 0 & 0 & -\Omega^2 & 0 \\ -6\Omega^3 & 0 & 0 & -4\Omega^2 \\ -3\Omega^4 & 0 & 0 & -2\Omega^3 \\ 0 & 0 & 2\Omega^3 & 0 \end{bmatrix} \hspace{-1mm},
\end{equation*}
and to calculate $e^{AT}$, we will establish two simple recursions.
\begin{lemma}\label{lemma: recursion}
    For all $n \in \mathbb{N}$,
    \begin{equation}\label{eq: recursion}
       A^{2n + 2} = (-1)^n \Omega^{2n} A^2 \qquad \text{and} \qquad A^{2n + 3} = (-1)^n \Omega^{2n} A^3.
    \end{equation}
\end{lemma}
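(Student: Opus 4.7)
The plan is a short induction anchored on the single algebraic identity $A^4 = -\Omega^2 A^2$. Once this identity is established, both recursions in \eqref{eq: recursion} follow by multiplying through by powers of $A^2$, so the entire content of the lemma reduces to verifying one $4 \times 4$ matrix equation.

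The cleanest way to obtain $A^4 = -\Omega^2 A^2$ is via the Cayley-Hamilton theorem. Since the eigenvalues of $A$ (already recorded earlier in the paper) are $\{0, 0, \pm j\Omega\}$, its characteristic polynomial is $\chi_A(\lambda) = \lambda^2(\lambda^2 + \Omega^2)$, whence $A^4 + \Omega^2 A^2 = 0$. Alternatively, since $A$, $A^2$, and $A^3$ have already been written out explicitly just above the lemma statement, one can simply compute $A \cdot A^3$ entrywise and compare with $-\Omega^2 A^2$; this is tedious but purely mechanical and leaves no room for subtlety. I would present the Cayley-Hamilton derivation for brevity and note the direct check as an alternative.

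With the base identity in hand, I would prove both parts of \eqref{eq: recursion} jointly by induction on $n$. The base case $n = 0$ is trivial: $A^{2\cdot 0 + 2} = A^2$ and $A^{2\cdot 0 + 3} = A^3$. For the inductive step, assuming $A^{2n+2} = (-1)^n \Omega^{2n} A^2$, right-multiplication by $A^2$ together with $A^4 = -\Omega^2 A^2$ gives
\begin{equation*}
    A^{2(n+1)+2} = A^{2n+2} \cdot A^2 = (-1)^n \Omega^{2n} A^4 = (-1)^{n+1} \Omega^{2(n+1)} A^2,
\end{equation*}
and a further multiplication by $A$ on the right produces the odd-index recursion from the even-index one. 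This closes the induction.

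There is no real obstacle here; the proof is essentially a one-line application of Cayley-Hamilton plus bookkeeping. The only thing to be mildly careful about is the indexing, making sure the exponents $(-1)^n \Omega^{2n}$ line up correctly when advancing $n \to n+1$, which is why I prefer to prove the even case first and derive the odd case as an immediate corollary rather than running two parallel inductions.
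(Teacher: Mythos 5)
Your proof is correct, and it differs from the paper's in two ways worth noting. First, the anchor identity: the paper establishes $A^4 = -\Omega^2 A^2$ and, separately, $A^5 = -\Omega^2 A^3$ by explicitly multiplying out the $4\times 4$ matrices already displayed above the lemma, whereas you obtain $A^4 + \Omega^2 A^2 = 0$ from the Cayley--Hamilton theorem using the eigenvalues $\{0,0,\pm j\Omega\}$ recorded in the proof of Proposition~2 (indeed $\chi_A(\lambda) = \lambda^2(\lambda^2+\Omega^2)$, as one can confirm directly). Second, the induction structure: the paper runs both recursions of \eqref{eq: recursion} in parallel inside a single inductive step, needing both $A^4 = -\Omega^2 A^2$ and $A^5 = -\Omega^2 A^3$, while you induct only on the even-index identity and obtain the odd-index one as an immediate corollary by right-multiplying by $A$, i.e., $A^{2n+3} = A^{2n+2}A = (-1)^n\Omega^{2n}A^2 A = (-1)^n \Omega^{2n} A^3$. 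Your version is shorter and explains \emph{why} the recursion exists (the characteristic polynomial forces it), at the cost of invoking the eigenvalue computation from elsewhere in the paper; the paper's version is self-contained at the level of raw matrix arithmetic, which is consistent with its surrounding computation of $e^{At}$, but duplicates work that your corollary step makes unnecessary. Both are valid; your indexing in the inductive step checks out.
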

\begin{proof}
    For $n = 0$, \eqref{eq: recursion} is trivially valid. Assuming that \eqref{eq: recursion} holds for some $n \in \mathbb{N}$, let us now show that \eqref{eq: recursion} also holds for $n+1$. 

    First, notice that
    \begin{equation*}
        A^4 = \begin{bmatrix} -3\Omega^4 & 0 & 0 & -2\Omega^3 \\ 0 & 0 & 2\Omega^3 & 0 \\ 0 & 0 & \Omega^4 & 0 \\ 6\Omega^5 & 0 & 0 & 4\Omega^4 \end{bmatrix} = -\Omega^2 A^2, \qquad \text{and} \qquad A^5 = \begin{bmatrix} 0 & 0 & \Omega^4 & 0 \\ 6\Omega^5 & 0 & 0 & 4\Omega^4 \\ 3\Omega^6 & 0 & 0 & 2\Omega^5 \\ 0 & 0 & -2\Omega^5 & 0 \end{bmatrix} = -\Omega^2 A^3.
    \end{equation*}
    For $n+1$ we calculate $A^{2(n+1)+2} = A^2 A^{2n+2} = A^2 (-1)^n \Omega^{2n} A^2$ because we assumed that \eqref{eq: recursion} holds at $n$. Then, using $A^4 = -\Omega^2 A^2$, we obtain $A^{2(n+1)+2} = (-1)^{n+1} \Omega^{2(n+1)} A^2$. Similarly, using $A^5 = -\Omega^2 A^3$, we obtain 
    \begin{equation*}
        A^{2(n+1)+3} = A^2 A^{2n+3} = A^2 (-1)^n \Omega^{2n} A^3 = (-1)^n \Omega^{2n} A^5 = (-1)^{n+1} \Omega^{2(n+1)}A^3
    \end{equation*}
    where we used \eqref{eq: recursion} at $n$. Therefore, \eqref{eq: recursion} also holds at $n+1$, which concludes the recursion.    
\end{proof}

Using Lemma~\ref{lemma: recursion}, we can now calculate the exponential series:
\begin{align*}
    e^{At} - I - At &= \sum_{n\, =\, 2}^{+\infty} A^n \frac{t^n}{n!} = \sum_{n\, =\, 0}^{+\infty} A^{2n+2} \frac{t^{2n+2}}{(2n+2)!} + \sum_{n\, =\, 0}^{+\infty} A^{2n+3} \frac{t^{2n+3}}{(2n+3)!} \\
    &= \sum_{n\, =\, 0}^{+\infty} (-1)^n \Omega^{2n} A^2 \frac{t^{2n+2}}{(2n+2)!} + \sum_{n\, =\, 0}^{+\infty} (-1)^n \Omega^{2n} A^{3}\frac{t^{2n+3}}{(2n+3)!} \\
    &= -\frac{A^2}{\Omega^2} \sum_{n\, =\, 0}^{+\infty} (-1)^{n+1} \frac{ (\Omega t)^{2n+2} }{(2n+2)!} - \frac{A^{3}}{\Omega^3} \sum_{n\, =\, 0}^{+\infty} (-1)^{n+1} \frac{(\Omega t)^{2n+3}}{(2n+3)!} \\
    &= -\frac{A^2}{\Omega^2} \big( \cos(\Omega t) - 1 \big) - \frac{A^{3}}{\Omega^3} \big( \sin(\Omega t) - \Omega t \big) = \frac{A^2}{\Omega^2} \big( 1- \cos(\Omega t) \big) + \frac{A^{3}}{\Omega^3} \big( \Omega t - \sin(\Omega t) \big).
\end{align*}
Then, replacing $A$, $A^2$ and $A^3$ in the equation above, we obtain:
\begin{equation*}
    e^{At} = \begin{bmatrix} 4 -3 \cos(\Omega t) & 0 & \frac{1}{\Omega}\sin(\Omega t) & \frac{2}{\Omega}\big(1 - \cos(\Omega t) \big) \\ 6\big(\sin(\Omega t)-\Omega t\big) & 1 & \frac{2}{\Omega}\big(\cos(\Omega t) - 1\big) & -3t + \frac{4}{\Omega}\sin(\Omega t) \\ 3\Omega\sin(\Omega t) & 0 & \cos(\Omega t) & 2\sin(\Omega t) \\ 6\Omega \big(\cos(\Omega t) - 1\big) & 0 & -2\sin(\Omega t) & 4\cos(\Omega t) -3 \end{bmatrix}. 
\end{equation*}
After the loss of control authority over thruster no. 4, matrix $C$ was defined in \eqref{eq:B and C} and matrix $R_\theta$ was defined in \eqref{eq:ODE rotation}. For the reader's convenience we restate both of these matrices here as
\begin{equation*}
    C = \left[\def\arraystretch{0.6}\begin{array}{c} 0 \\ 0 \\ -\sqrt{2} \\ 0 \end{array}\right] \hspace{-1mm}, \quad R_\theta(t) = \begin{bmatrix} 1 & 0 & 0 & 0 \\ 0 & 1 & 0 & 0 \\ 0 & 0 & \cos\big( \theta(t) \big) & -\sin\big( \theta(t) \big) \\ 0 & 0 & \sin\big( \theta(t) \big) & \cos\big( \theta(t) \big) \end{bmatrix} \quad \text{and} \quad R_\theta(t) C = -\sqrt{2} \begin{bmatrix} 0 \\ 0 \\ \cos\big(\theta(t) \big) \\ \sin\big(\theta(t)\big) \end{bmatrix} \hspace{-1mm}.
\end{equation*}
Then,
\begin{equation*}
  e^{A T} R_\theta(t) C = -\frac{\sqrt{2}}{\Omega} \begin{bmatrix} \sin(\Omega T)\cos\big(\theta(t)\big) + 2\big( 1 -\cos(\Omega T) \big)\sin\big(\theta(t)\big) \\ 2\big( \cos(\Omega T) - 1\big)\cos\big(\theta(t)\big) + \big(4\sin(\Omega T) -3\Omega T\big)\sin\big(\theta(t)\big) \\ \Omega\cos(\Omega T) \cos\big(\theta(t) \big) + 2\Omega\sin(\Omega T)\sin\big(\theta(t)\big) \\ -2\Omega \sin(\Omega T) \cos\big(\theta(t) \big) + \Omega\big(4\cos(\Omega T) -3\big)\sin\big(\theta(t)\big) \end{bmatrix}.
\end{equation*}
Since $R_\theta$ is a rotation matrix, its inverse can be easily calculated as
\begin{equation*}
    R_\theta^{-1}(t+T) = \begin{bmatrix} 1 & 0 & 0 & 0 \\ 0 & 1 & 0 & 0 \\ 0 & 0 & \cos\big( \theta(t+T) \big) & \sin\big( \theta(t+T) \big) \\ 0 & 0 & -\sin\big( \theta(t+T) \big) & \cos\big( \theta(t+T) \big) \end{bmatrix}.
\end{equation*}
Since the first two rows of $B$ are null, to have $-R_\theta^{-1}(t+T)e^{AT}R_\theta(t)C\mathcal{W} \subseteq B\mathcal{U}$, the first two components of $-R_\theta^{-1}(t+T)e^{AT}R_\theta(t)C$ should also be zero for all $t \geq 0$, i.e.,
\begin{align}
    0 &= \sin(\Omega T) \cos\big(\theta(t)\big) + 2\big(1 - \cos(\Omega T) \big)\sin\big(\theta(t)\big) , \label{eq: c1} \\ 
    0 &= 2 \big( \cos(\Omega T)-1\big)\cos\big(\theta(t)\big) + \big( 4\sin(\Omega T) -3\Omega T \big) \sin\big(\theta(t)\big). \label{eq: c2}
\end{align}
For \eqref{eq: c1} to hold independently of $\theta(t)$, we need $\sin(\Omega T) = 0$ and $\cos(\Omega T) = 1$, i.e., $T = \frac{2 \pi}{\Omega}n$, $n \in \mathbb{N}$. However, \eqref{eq: c2} would yield $\sin\big(\theta(t)\big) = 0$, i.e., $y(t) = 0$ for all $t \geq 0$ which prevents the spacecraft from tracking trajectory $\mathcal{T}_\text{ref}$. Therefore, we cannot define a minimal correction time $T_c$ for the nonlinear spacecraft dynamics \eqref{eq:delayed split ODE rotation}. Then, we cannot cancel exactly $Cw(t)$ after some actuation delay as we did in Section~\ref{subsec: delayed res traj tracking}. Without this perfect cancellation an open-loop controller like in Theorem~\ref{thm: delayed res traj track} would not be able to track a trajectory. We will then transform this controller into a more complex feedback controller.

\subsection{Closed-loop controller}\label{subsec: feedback}

Motivated by the ISS malfunction \citep{ISS_thruster} where the undesirable thrust was constant, we will assume in this section that $w$ is Lipschitz continuous. As in Theorem~\ref{thm: delayed res traj track} we partition $\mathcal{P} = B\mathcal{U} \ominus -C\mathcal{W}$ into two parts: $\mathcal{P}_\text{ref}$ for the trajectory tracking and $\mathcal{P}_\varepsilon$ for the feedback correction. However, $u(t)$ has only access to $X(t-\tau)$ and not $X(t)$, hence a straightforward linear feedback is not possible. We will replace $X(t)$ by a state predictor $X_p(t)$, designed to predict $X(t)$ based on the information available at time $t-\tau$. We will use a predictor adapted from \citep{Lechappe} which takes advantage of the system's dynamics:
\begin{equation}\label{eq:predictor}
    X_{p}(t) = e^{A\tau} X(t-\tau) + \int_{t-\tau}^{t} e^{A(t-s)} r R_\theta(s) \big( B u(s) + Cw(s-\tau) \big) ds.
\end{equation}

Before stating our main theorem for resilient trajectory tracking, we recall the definitions of $\rho_{max}$ from Proposition~\ref{prop:resilience}, $\rho_{max} = \max\big\{ \rho \geq 0 : \mathbb{B}^2(0, \rho) \subseteq \mathcal{P} \big\}$, and the log-norm $\mu(A) = \max\big\{ \Lambda((A+A^\top)/2)\big\}$.

\begin{theorem}\label{thm: closed loop res track}
    Let $K \in \mathbb{R}^{4 \times 4}$ such that $\tilde{A} := A - rBK$ is Hurwitz, and let $P \in \mathbb{R}^{4 \times 4}$ and $Q \in \mathbb{R}^{4 \times 4}$ such that $P \succ 0$, $Q \succ 0$ and $\tilde{A}^\top P + P \tilde{A} = -Q$. Define $\alpha := \frac{\lambda_{min}^{Q}}{2\lambda_{max}^{P}}$, $\beta := r\sqrt{\lambda_{max}^P}\|C\|L\tau$, and $\gamma := r\|BK\|\frac{e^{\mu(A)\tau} -1}{\mu(A)}$. For $L > 0$, let $\varepsilon := \frac{\|BK\|}{\sqrt{\lambda_{min}^P}} \max\Big( \|X_\text{ref}(0) - X(0)\|_P, \frac{\beta}{\alpha}(1+\gamma)\Big) + \gamma \|C\|L\tau$.
    
    If $\varepsilon + \rho_\text{ref} \leq \rho_{max}$, then, for all $w \in \mathcal{F}(\mathcal{W})$ with a Lipschitz constant $L$, the malfunctioning spacecraft \eqref{eq:delayed split ODE rotation} can track reference trajectory $\mathcal{T}_\text{ref}$ with a tolerance $\big\| X_\text{ref}(t) - X(t)\big\| \leq \frac{1}{\sqrt{\lambda_{min}^P}} \max \left( \|X_\text{ref}(0) - X(0)\|_P,\ \frac{\beta}{\alpha}(1+\gamma) \right)$ for all $t \geq \tau$.
\end{theorem}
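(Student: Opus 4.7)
The plan is to implement a predictor-based closed-loop controller combining feed-forward of $p_\text{ref}$, feedback of the predicted tracking error, and pre-compensation of the delayed disturbance, and then to analyse the resulting tracking error via a quadratic Lyapunov function associated with $(P,Q)$. Concretely, I would propose the control law
\begin{equation*}
    Bu(t) = R_\theta^{-1}(t)\bigl[p_\text{ref}(t) - BK\bigl(X_p(t) - X_\text{ref}(t)\bigr)\bigr] - Cw(t-\tau).
\end{equation*}
The factor $R_\theta^{-1}(t)$ is designed to make the rotation cancel in the closed-loop error equation, while $-Cw(t-\tau)$ is the best delayed attempt at eliminating the undesirable input.

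The first step is to quantify the predictor error $e_p(t) := X(t) - X_p(t)$. Subtracting \eqref{eq:predictor} from the variation-of-constants expression for the solution of \eqref{eq:delayed split ODE rotation} gives
\begin{equation*}
    e_p(t) = \int_{t-\tau}^t e^{A(t-s)}\, r R_\theta(s)\, C\bigl(w(s)-w(s-\tau)\bigr)\,ds,
\end{equation*}
so that $\|R_\theta\|=1$, the log-norm inequality $\|e^{A\sigma}\| \leq e^{\mu(A)\sigma}$, and the $L$-Lipschitz continuity of $w$ yield $\|e_p(t)\| \leq \|C\|L\tau\,\gamma/\|BK\|$ for all $t\geq\tau$. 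Substituting the proposed controller into \eqref{eq:delayed split ODE rotation}, the rotation cancels, and using $X_p - X_\text{ref} = \eta - e_p$ with $\eta := X - X_\text{ref}$ produces
\begin{equation*}
    \dot\eta(t) = \tilde A\,\eta(t) + rBK\,e_p(t) + rR_\theta(t)C\bigl(w(t) - w(t-\tau)\bigr),
\end{equation*}
whose right-hand-side disturbance $d(t)$ is bounded by $\|d(t)\| \leq r\|C\|L\tau(1+\gamma)$, hence $\|d(t)\|_P \leq \beta(1+\gamma)$.

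Setting $V(t) := \|\eta(t)\|_P^2$, the Lyapunov equation gives $\dot V = -\eta^\top Q\eta + 2\eta^\top Pd$, so that $\dot V \leq -2\alpha V + 2\sqrt V\,\|d\|_P$ via Cauchy--Schwarz in the $P$-inner-product. Dividing by $2\sqrt V$ produces $\tfrac{d}{dt}\sqrt V \leq -\alpha\sqrt V + \beta(1+\gamma)$, and the comparison lemma then supplies
\begin{equation*}
    \sqrt{V(t)} \leq \max\Bigl\{\|X_\text{ref}(0) - X(0)\|_P,\ \tfrac{\beta(1+\gamma)}{\alpha}\Bigr\}.
\end{equation*}
Dividing by $\sqrt{\lambda_{min}^P}$ yields the claimed tolerance on $\|X_\text{ref}(t) - X(t)\|$ for $t\geq\tau$.

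It remains to verify admissibility, i.e.\ that $u(t)\in\mathcal{U}$ throughout. Since $B$ is supported in the $(e_3, e_4)$-plane, $p_\text{ref}$ lies in that plane by construction, and $\mathcal{P}_b = \mathbb{B}^2(0,\rho_{max})$ is rotation-invariant there, the vector $R_\theta^{-1}(t)[p_\text{ref}(t) - BK(X_p - X_\text{ref})]$ lies in $\mathcal{P}_b \subseteq \mathcal{P}$ as soon as $\rho_\text{ref} + \|BK(X_p - X_\text{ref})\| \leq \rho_{max}$. Combining the Lyapunov bound on $\|\eta\|$ with $\|BK\|\|e_p\| \leq \gamma\|C\|L\tau$ gives exactly $\|BK(X_p - X_\text{ref})\| \leq \varepsilon$, and the hypothesis $\varepsilon + \rho_\text{ref} \leq \rho_{max}$ closes the argument via the defining property of $\mathcal{P}$. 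The principal obstacle is the circular dependence between the Lyapunov estimate (which presumes $u\in\mathcal{U}$) and the admissibility check (which relies on the Lyapunov bound); I would resolve it by a standard continuation argument, setting $T^\star := \sup\{t\geq 0 : \|BK(X_p - X_\text{ref})(s)\|\leq\varepsilon\text{ for all }s\leq t\}$, noting $T^\star > 0$ by continuity of the closed-loop solution, and deriving a contradiction with maximality unless $T^\star = +\infty$. A secondary technicality concerns the interval $[0,\tau]$, where the predictor needs a prescribed initial history and the predictor-error bound must be maintained, which is why the theorem only asserts the tolerance for $t\geq\tau$.
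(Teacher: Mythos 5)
Your proposal is correct and takes essentially the same route as the paper: the identical control law (rotated feed-forward of $p_\text{ref}$, predictor-based feedback through $BK$, and delayed cancellation $-Cw(t-\tau)$), the same log-norm bound on the predictor error, the same Lyapunov/comparison-lemma estimate yielding $\|Y(t)\|_P \leq \max\big(\|Y(0)\|_P, \tfrac{\beta}{\alpha}(1+\gamma)\big)$, and the same admissibility check against $\varepsilon + \rho_\text{ref} \leq \rho_{max}$ using rotation invariance of the ball and the Minkowski-difference definition of $\mathcal{P}$. Your added continuation argument is harmless but unnecessary: the circularity you worry about is not actually present, since the Lyapunov estimate is derived from the control formula itself (not from any assumption that $u \in \mathcal{U}$), so the paper's post hoc verification that the formula's values remain admissible already closes the argument.
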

\begin{proof}
    The existence of matrix $K$ is justified by the controllability of the pair $(A,B)$ \citep{Khalil}. Since the resulting $\tilde{A}$ is Hurwitz, matrices $P \succ 0$ and $Q \succ 0$ exist according to Lyapunov theory \citep{Kalman}. We consider any $w \in \mathcal{F}(\mathcal{W})$ with a Lipschitz constant $L$ and assume that $\varepsilon +\rho_\text{ref} \leq \rho_{max}$. We define $\mathcal{P}_{\varepsilon} := \mathbb{B}^2(0, \varepsilon)$ and recall $\mathcal{P}_b =\mathbb{B}^2(0, \rho_{max})$ as introduced in Proposition~\ref{prop:resilience}. Then, $\mathcal{P}_\varepsilon \oplus \mathcal{P}_\text{ref} = \mathbb{B}^2(0, \varepsilon) \oplus \mathbb{B}^2(0, \rho_\text{ref}) = \mathbb{B}^2(0, \varepsilon+\rho_\text{ref}) \subseteq \mathbb{B}^2(0, \rho_{max}) = \mathcal{P}_b \subseteq \mathcal{P}$.
    
    For $t \geq \tau$ we introduce control signals $u_w$, $u_\text{ref}$ and $u_\varepsilon$ such that $Bu_w(t) := -Cw(t-\tau)$, $Bu_\text{ref}(t) := R_\theta^{-1}(t) p_\text{ref}(t)$ and $Bu_\varepsilon(t) := R_\theta^{-1}(t) BK\big(X_\text{ref}(t) - X_p(t)\big)$ with the predictor $X_p$ from \eqref{eq:predictor}. We consequently define the feedback control law $u$ by
    \begin{equation}\label{eq: feedback controller}
        Bu(t) := Bu_w(t) + Bu_\text{ref}(t) + Bu_\varepsilon(t) = -Cw(t-\tau) + R_\theta^{-1}(t) p_\text{ref}(t) + R_\theta^{-1}(t) BK\big(X_\text{ref}(t) - X_p(t)\big).
    \end{equation}
    To prove that controller \eqref{eq: feedback controller} is admissible we need to show that $Bu(t) \in B\mathcal{U}$ for all $t \geq \tau$. Firstly, $Bu_w(t) = -Cw(t - \tau) \in -C \mathcal{W}$.
    Because $\mathcal{P}_\text{ref}$ is a ball centered on $0$, it is invariant by rotation $R_\theta$. Then, $R_\theta^{-1}(t) p_\text{ref}(t) \in \mathcal{P}_\text{ref}$, i.e., $Bu_\text{ref}(t) \in \mathcal{P}_\text{ref}$. Since $-C\mathcal{W} \oplus \mathcal{P}_\text{ref} \oplus \mathcal{P}_\varepsilon \subseteq B\mathcal{U}$, it now suffices to show that $Bu_\varepsilon(t) \in \mathcal{P}_\varepsilon = \mathbb{B}^2(0, \varepsilon)$. To do so, we first apply \eqref{eq: feedback controller} to dynamics \eqref{eq:delayed split ODE rotation}. By definition of $\mathcal{T}_\text{ref}$, we have $r p_\text{ref}(t) = \dot X_\text{ref}(t) - AX_\text{ref}(t)$, and thus
    \begin{align*}
        \dot{X}(t) &= AX(t) + rR_\theta(t) Bu(t) + rR_\theta(t) Cw(t) \\
        &= AX(t) -rR_\theta(t) Cw(t-\tau) +\dot X_\text{ref}(t) - AX_\text{ref}(t) + rBK\big(X_\text{ref}(t) - X_p(t)\big) + rR_\theta(t)Cw(t),
    \end{align*}
    i.e., 
    \begin{equation*}
        \dot{X}(t) - \dot X_\text{ref}(t) = (A- rBK) \big(X(t) - X_\text{ref}(t)\big) + rR_\theta(t)\big( Cw(t) - Cw(t-\tau) \big) + rBK \big(X(t) - X_p(t)\big).
    \end{equation*}
    We define
    \begin{equation*}
        Y(t) := X(t) - X_\text{ref}(t), \quad \Delta C(t) := rR_\theta(t)\big( Cw(t) - Cw(t-\tau) \big), \quad \Delta X(t) := r BK\big( X(t) - X_p(t) \big),
    \end{equation*}
    so that $\dot Y(t) = \tilde{A} Y(t) + \Delta C(t) + \Delta X(t)$.
    Inspired by the method described in Section~9.3 of \citep{Khalil}, we will now show that $Y(t)$ is bounded, which in turn will prove that $Bu_\varepsilon(t) \in \mathcal{P}_\varepsilon$ and hence that control law \eqref{eq: feedback controller} is admissible. We consider the derivative of the norm $Y(t)^\top P Y(t) = \|Y(t)\|_{P}^2$ and obtain the following:
    \begin{equation*}
        \frac{d}{dt} \|Y(t)\|_P^2 = \dot Y(t)^\top P Y(t) + Y(t)^\top P \dot Y(t) = Y(t)^\top \big(\tilde{A}^\top P + P \tilde{A} \big) Y(t) + 2 Y(t)^\top P \big(\Delta C(t) + \Delta X(t) \big).
    \end{equation*}
    Since $\|\cdot\|_P$ is a norm, the Cauchy-Schwarz inequality \citep{Matrix} yields $Y(t)^\top P \Delta C(t) \leq \|Y(t)\|_{P} \|\Delta C(t)\|_{P}$. Then, using $\|R_\theta(t)\| = 1$ and the Lipschitz constant $L$ of $w$, we have
    \begin{align*}
        \|\Delta C(t)\|_{P} &\leq \sqrt{\lambda_{max}^P} r\|R_\theta(t)\| \|Cw(t) - Cw(t-\tau)\| \leq r\sqrt{\lambda_{max}^P} \|C\| \big| w(t) - w(t-\tau) \big| \\
        &\leq r\sqrt{\lambda_{max}^P} \|C\| L \tau = \beta.
    \end{align*}
    Similarly, 
    \begin{equation*}
        \|\Delta X(t)\|_P \leq r \sqrt{\lambda_{max}^P} \big\| BK \big( X(t) - X_p(t) \big) \big\| \leq r \sqrt{\lambda_{max}^P} \|BK\| \|X(t) - X_p(t)\|.
    \end{equation*}
    We write the state of the system $X(t)$ in a form similar to \eqref{eq:predictor} to compare it with $X_p$:
    \begin{equation*}
        X(t) = e^{A\tau} X(t-\tau) + \int_{t-\tau}^t e^{A(t-s)}rR_\theta(s) \big( Bu(s) + Cw(s) \big)ds.
    \end{equation*}
    Then, reusing the log-norm $\mu(A)$ \citep{exp} as in Theorem~\ref{thm: delayed res reach}, we obtain
    \begin{align*}
        \big\| X(t) - X_p(t) \big\| &\leq \int_{t-\tau}^{t} \big\|e^{A(t-s)}\big\| r \|R_\theta(s)\| \big\| Cw(s) - Cw(s-\tau)\big\| ds \\
        &\leq r \int_{t-\tau}^{t} e^{\mu(A)(t-s)} \|C\| L\tau\, ds = r \|C\| L\tau \frac{e^{\mu(A)\tau} -1}{\mu(A)}.
    \end{align*}
    Therefore, $\|\Delta X(t)\|_P \leq r \sqrt{\lambda_{max}^P} \|BK\| r \|C\| L \tau \frac{e^{\mu(A)\tau} -1}{\mu(A)} = \beta \gamma$, so that
    \begin{equation*}
        \frac{d}{dt} \|Y(t)\|_P^2 \leq -Y^\top(t) Q Y(t) + 2 \|Y(t)\|_{P} \big( \|\Delta C(t)\|_P +\|\Delta X(t)\|_P\big) \leq -\frac{\lambda_{min}^{Q}}{\lambda_{max}^{P}} \|Y(t)\|_{P}^2 + 2 \beta(1+\gamma) \|Y(t)\|_{P}.
    \end{equation*}
    Indeed, $Q \succ 0$ yields $-Y^\top Q Y \leq -\lambda_{min}^{Q} Y^\top Y$ \citep{Khalil} and $\|Y\|_{P}^2 \leq \lambda_{max}^{P} Y^\top Y$ leads to $-Y^\top Y \leq \frac{-1}{\lambda_{max}^{P}}\|Y\|_{P}^2$. Hence, we obtain
    \begin{equation*}
        \frac{d}{dt} \|Y(t)\|_{P}^2 \leq -2\alpha \|Y(t)\|_{P}^2 + 2\beta(1+\gamma) \|Y(t)\|_{P}.
    \end{equation*}
    Since $\frac{d}{dt} \|Y(t)\|_{P}^2 = 2\|Y(t)\|_{P} \frac{d}{dt} \|Y(t)\|_{P}$, we have $\frac{d}{dt} \|Y(t)\|_{P} \leq -\alpha \|Y(t)\|_{P} + \beta(1+\gamma)$ for $Y(t) \neq 0$. Let us define the function $f(v) := -\alpha v + \beta(1+\gamma)$. The solution of the differential equation $\dot v(t) = f\big( v(t) \big)$ with initial condition $v(0) = \|Y(0)\|_{P}$ is $v(t) = e^{-\alpha t} \left( \|Y(0)\|_{P} - \frac{\beta}{\alpha}(1+\gamma) \right) + \frac{\beta}{\alpha}(1+\gamma)$. Since $f(v)$ is Lipschitz in $v$ and $\frac{d}{dt} \|Y(t)\|_{P} \leq f\big( \|Y(t)\|_{P} \big)$, the Comparison Lemma of \citep{Khalil} states that $\|Y(t)\|_{P} \leq v(t)$ for all $t \geq 0$. Then,
    \begin{equation*}
        \|Y(t)\|_P \leq e^{-\alpha t} \left( \|Y(0)\|_{P} - \frac{\beta}{\alpha}(1+\gamma) \right) + \frac{\beta}{\alpha}(1+\gamma) \xrightarrow[t \rightarrow \infty]{} \frac{\beta}{\alpha}(1+\gamma). 
    \end{equation*}
    Since this bound on $\|Y(t)\|_P$ is monotonic, we have $\|Y(t)\|_P \leq \max \left( \|Y(0)\|_P,\ \frac{\beta}{\alpha}(1+\gamma) \right)$. Then,
    \begin{align*}
        \|Bu_\varepsilon(t)\| &= \big\| R_\theta^{-1}(t) BK\big(X_\text{ref}(t) - X_p(t)\big) \big\| \leq \big\|R_\theta^{-1}(t)\big\| \, \|BK\| \, \big(\|X_\text{ref}(t) - X(t)\| + \|X(t) - X_p(t)\|\big) \\
        &\leq \|BK\| \left( \|Y(t)\| +  r \|C\| L\tau \frac{e^{\mu(A)\tau} -1}{\mu(A)} \right) \leq \|BK\| \frac{\|Y(t)\|_P}{\sqrt{\lambda_{min}^P}} + \gamma \|C\|L\tau \leq \varepsilon,
    \end{align*}
    by definition of $\varepsilon$ and using $\big\|R_\theta^{-1}(t)\big\| = 1$. Therefore, $Bu_\varepsilon(t) \in \mathbb{B}^2(0, \varepsilon) = \mathcal{P}_\varepsilon$.
    To sum up, $Bu(t) = Bu_w(t) + Bu_\text{ref}(t) + Bu_\varepsilon(t) \in -C\mathcal{W} \oplus \mathcal{P}_\text{ref} \oplus \mathcal{P}_\varepsilon \subseteq -C\mathcal{W} \oplus \mathcal{P} \subseteq B\mathcal{U}$ for all $t \geq 0$. Therefore, control law \eqref{eq: feedback controller} is admissible and the announced tracking tolerance is verified: 
    \begin{equation*}
        \|X_\text{ref}(t) - X(t)\| = \|Y(t)\| \leq \frac{\|Y(t)\|_P}{\sqrt{\lambda_{min}^P}} \leq  \max \left( \frac{\|Y(0)\|_P}{\sqrt{\lambda_{min}^P}},\ \frac{\beta+\beta\gamma}{\alpha\sqrt{\lambda_{min}^P}} \right).
    \end{equation*}
\end{proof}

Theorem~\ref{thm: closed loop res track} provides a controller with trajectory tracking guarantees for the malfunctioning spacecraft \eqref{eq:delayed split ODE rotation}. The tracking error is dictated by two main terms $\beta$ and $\beta \gamma$, which respectively bound the prediction errors on the undesirable thrust $\|\Delta C(t)\|_P$ and the state $\|\Delta X(t)\|_P$. The term $\alpha\sqrt{\lambda_{min}^P}$ is just a conversion factor between the $P$-norm and the Euclidean norm. 
The term $\|X_\text{ref}(0) - X(0)\|_P$ in the definition of $\varepsilon$ ensures that the controller is robust to initial state uncertainty as discussed in Section~\ref{subsec: tracking no delay}.
The tracking tolerance of Theorem~\ref{thm: closed loop res track} can also be interpreted as a convergence radius for the controller. Indeed, in the proof of Theorem~\ref{thm: closed loop res track} we showed that $\|X_\text{ref}(t) - X(t)\|$ must be small enough for $u_\varepsilon(t)$ to be admissible. If $\|X_\text{ref}(0) - X(0)\|$ is too large, control law \eqref{eq: feedback controller} might not be admissible and the convergence of $X(t)$ to $\mathcal{T}_\text{ref}$ cannot be guaranteed. The choice of matrices $K$, $P$ and $Q$ must then be optimized for the controller to be sufficiently robust to initial state uncertainty. A similar but simplified optimization is discussed in Exercise~9.1 of \citep{Khalil}.

We will now implement controller \eqref{eq: feedback controller} embedded with predictor \eqref{eq:predictor} on the malfunctioning spacecraft dynamics \eqref{eq:delayed split ODE rotation} to study its performance over the course of the inspection mission and respond to Problem~\ref{prob:mission}.

\section{Numerical simulation}\label{sec:simulation}

In this section we study whether controller \eqref{eq: feedback controller} can fulfill the mission scenario of Section~\ref{sec:motivation}. 
Recall that the statement of Problem~\ref{prob:mission} specifies neither the malfunctioning thruster, nor the regularity of the undesirable thrust signal $w$, nor the value of the actuation delay $\tau$. 
As discussed above Fig.~\ref{fig: p_ref}, tracking the reference trajectory of Fig.~\ref{fig:ref orbit} appears to only be possible if the malfunctioning thruster is no. 4. Therefore, we will only investigate scenarios featuring the loss of control authority over thruster no. 4. In such a case $\rho_{max} > 0$, which enables us to apply Theorem~\ref{thm: closed loop res track} and use controller~\eqref{eq: feedback controller}. Then, to address Problem~\ref{prob:mission} we will simulate a variety of scenarios with different undesirable thrust signals and different actuation delays. We perform all the simulations in MATLAB and all the codes are accessible on github\footnote{\url{https://github.com/Jean-BaptisteBouvier/Spacecraft-Resilience}}.

\subsection{Nominal scenario}\label{subsec: w rand 0.2s}

In this first scenario, we choose an actuation delay $\tau = 0.2\, s$ following \citep{thruster_delay} and a Lipschitz constant $L = 0.1$ for $w$ so that the malfunctioning thrust cannot vary by more than a tenth of its capability every second since $\mathcal{W} = [0, 1]$. We choose matrices $K$, $P$ and $Q$ to maximize $\varepsilon$ subject to $\varepsilon \leq \rho_{max} - \rho_\text{ref}$, where $\rho_\text{ref} = 4.85 \times 10^{-4}$ is the maximal input norm on the reference trajectory, as seen on Fig.~\ref{fig: p_ref}. Ample numerical testing on MATLAB led us to believe that the optimal matrices are $Q = I$ and $K$ such that $A - r B K$ has 4 identical eigenvalues. Then,
\begin{equation*}
    P = \begin{bmatrix} 2.77 & 0 & 1.77 & 0.01 \\ 0 & 2.77 & -0.01 & 1.77 \\ 1.77 & -0.01 & 8 & 0 \\ 0.01 & 1.77 & 0 & 8 \end{bmatrix} \quad \text{and} \quad K = 472 \begin{bmatrix} 1 & 1 & 1 & 1 \\ 1 & -1 & 1 & -1\\ -1 & -1 & -1 & -1 \\ -1 & 1 & -1 & 1 \end{bmatrix}
\end{equation*}
so that $\varepsilon = 0.4133 < \rho_{max} - \rho_\text{ref} = 0.4137$ and the tracking tolerance is $\frac{\beta(1+\gamma)}{\alpha\sqrt{\lambda_{min}^P}} = 1.5 \times 10^{-4}$ for $X(0) = X_\text{ref}(0)$. 

Then, controller~\eqref{eq: feedback controller} ensures excellent tracking of the reference trajectory $\mathcal{T}_\text{ref}$, as shown on Fig.~\ref{fig:traj and error}(\subref{fig:tracking comparison}). We compute the position error between the reference state and the tracking state on Fig.~\ref{fig:traj and error}(\subref{fig:pos comp w rand}). We observe that the position error is never larger than $1.07\, mm$ and averages only $0.36\, mm$. We acknowledge that these extremely small errors are only possible because all dynamics, states and thrusts are known exactly in our simple simulation.

\begin{figure}[htbp!]
    \centering
    \begin{subfigure}[]{0.55\textwidth}
       \includegraphics[scale = 0.48]{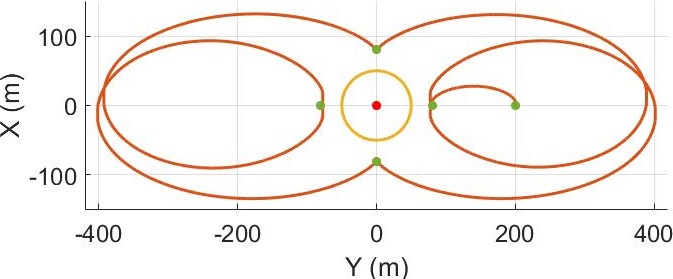}
        \caption{Trajectory tracking by controller \eqref{eq: feedback controller} (red) linking the waypoints (green) to inspect the target satellite (red) without breaching the KOS (yellow).}
        \label{fig:tracking comparison}
    \end{subfigure}\hfill
    \begin{subfigure}[]{0.44\textwidth}
        \includegraphics[scale = 0.4]{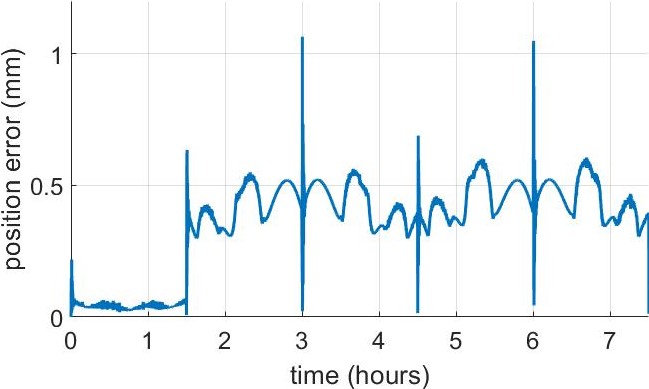}
        \caption{Position error between the reference and tracking trajectories.}
        \label{fig:pos comp w rand}
    \end{subfigure}
    \caption{Analysis of the trajectory tracking performance for a Lipschitz continuous undesirable input $w$ and actuation delay $\tau = 0.2\, s$.}
    \label{fig:traj and error}
\end{figure}

To compare with the tracking tolerance of $1.5 \times 10^{-4}$, we also compute the norm difference between the reference and tracking states: $\|X(t) - X_\text{ref}(t)\|$. The average norm difference is $3.6\times 10^{-4}$, while the maximal norm difference is $10.5 \times 10^{-4}$. Let us investigate why these values are slightly larger than the tracking tolerance. 
First, note that $\|X(t)\|^2 = x(t)^2 + y(t)^2 + \dot x(t)^2 + \dot y(t)^2$ where position $(x(t), y(t))$ is of the order of $10^2\, m$ as shown on Fig.~\ref{fig:traj and error}(\subref{fig:tracking comparison}), while velocity $(\dot x(t), \dot y(t))$ is of the order $10^{-1}\, m \cdot s^{-1}$. Because of these orders of magnitudes, the norm difference between reference and tracking states reflects mostly the position error. Based on Fig.~\ref{fig:traj and error}(\subref{fig:pos comp w rand}), the maximal norm difference occurs at $3$ hours and $6$ hours, i.e., at the waypoints $x = \pm 80\, m$ and $y = 0\, m$ as shown on Fig.~\ref{fig:traj and error}(\subref{fig:tracking comparison}). At every other waypoint Fig.~\ref{fig:traj and error}(\subref{fig:pos comp w rand}) also shows error spikes albeit of smaller magnitude. Because the sudden stop and start occurring at each waypoint are not well captured by the discrete dynamics of our simulation, the actual norm difference is larger than the threshold value of Theorem~\ref{thm: closed loop res track}.

The undesirable thrust input $w$ is generated as a stochastic signal, whose magnitude is represented in yellow in Fig.~\ref{fig:inputs}(\subref{fig:total_inputs}). To counteract $w$ while following the reference trajectory, the controlled input $u$ verifies approximately the intuitive relation $\|u\| \approx \|u_\text{ref}\| + \|w\|$. More specifically, Fig.~\ref{fig:inputs}(\subref{fig:thrusters}) shows that thrust inputs $u_3$ and $u_5$ replicate the reference thrust profile of Fig.~\ref{fig:ref_input}, while $u_1$ and $u_2$ counteract malfunctioning thruster no. 4 as expected from their opposite placement on Fig.~\ref{fig:spacecraft}.

\begin{figure}[htbp!]
    \centering
    \begin{subfigure}[]{0.49\textwidth}
        \includegraphics[scale = 0.45]{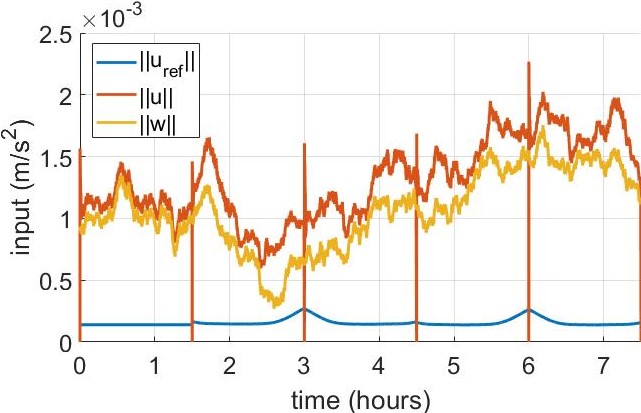}
        \caption{Magnitude of the thrust inputs for the reference trajectory $\|u_\text{ref}\|$ (blue), for the tracking trajectory the controlled input is $\|u\|$ (red) and the undesirable input is $\|w\|$ (yellow).}
        \label{fig:total_inputs}
    \end{subfigure}\hfill
    \begin{subfigure}[]{0.49\textwidth}
        \includegraphics[scale = 0.45]{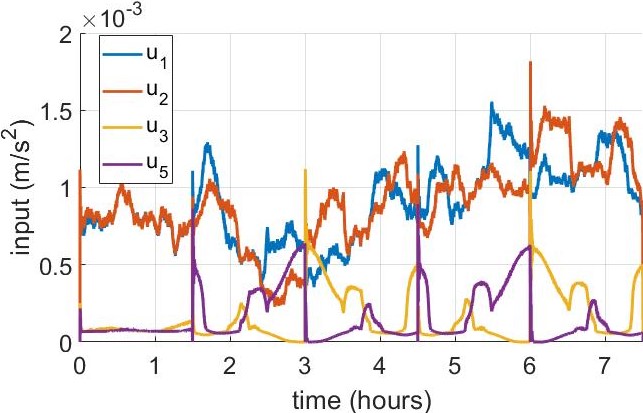}
        \caption{Thrust profiles for the four controlled thrusters of the chaser satellite on the tracking trajectory.}
        \label{fig:thrusters}
    \end{subfigure}
    \caption{Analysis of the thrust profiles of the malfunctioning satellite for a Lipschitz undesirable input $w$ and actuation delay $\tau = 0.2\, s$.}
    \label{fig:inputs}
\end{figure}

The fuel consumption on the reference and tracking trajectories is displayed on Fig.~\ref{fig:fuel and velocity}(\subref{fig:fuel}). The yellow curve represents the mass of fuel $m_w = 1.06\, kg$ used to produce the undesirable thrust, while the red one shows the mass of fuel $m_u = 1.31\, kg$ used by the controlled thrusters. The reference trajectory without malfunctions requires $m_\text{ref} = 0.16\, kg$ of fuel. As expected, $m_u \approx m_\text{ref} + m_w$. We have the intuition that the gap between $m_u$ and $m_\text{ref} + m_w$ will grow with $\tau$ and with the unpredictability of $w$. 

As can be expected from the tracking accuracy displayed on Fig.~\ref{fig:traj and error}, the velocity tracking of the reference is also extremely accurate with velocities remaining within $0.35\, mm/s$ of each others, as illustrated on Fig.~\ref{fig:fuel and velocity}(\subref{fig:velocity}). As on Fig.~\ref{fig:traj and error}(\subref{fig:pos comp w rand}), the error spikes at each waypoint and displays also the same symmetry as the orbit.

\begin{figure}[htbp!]
    \centering
    \begin{subfigure}[]{0.49\textwidth}
        \includegraphics[scale = 0.45]{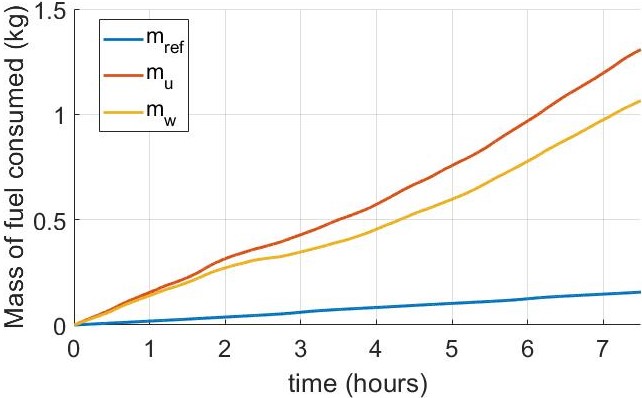}
        \caption{Comparison of fuel consumption. The mass of fuel used to complete the reference trajectory is $m_\text{ref}$ (blue). After the loss of control over thruster no. 4, it consumes a mass of fuel $m_w$ (yellow), while the controlled thrusters use a mass $m_u$ (red).}
        \label{fig:fuel}
    \end{subfigure}\hfill
    \begin{subfigure}[]{0.49\textwidth}
        \includegraphics[scale = 0.45]{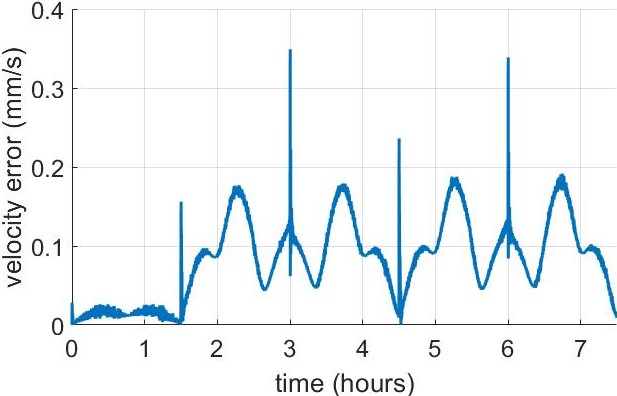}
        \caption{Velocity error between the reference and tracking trajectories.}
        \label{fig:velocity}
    \end{subfigure}
    \caption{Comparison of the fuel consumption and velocities for a Lipschitz continuous undesirable input $w$ and actuation delay $\tau = 0.2\, s$.}
    \label{fig:fuel and velocity}
\end{figure}

Based on Fig.~\ref{fig:traj and error}, \ref{fig:inputs} and \ref{fig:fuel and velocity}, controller~\eqref{eq: feedback controller} performs excellently and enables the system to complete its mission when $w$ is Lipschitz continuous and $\tau = 0.2\, s$. To address Problem~\ref{prob:mission} let us now investigate more challenging scenarios.

\subsection{Lipschitz undesirable thrust and actuation delay of 8 seconds}\label{subsec: w rand 8s}

We first increase the actuation delay $\tau$ from $0.2\, s$ to $8\, s$ and keep the same Lipschitz and stochastic undesirable thrust signal $w$. The previous guarantees of Theorem~\ref{thm: closed loop res track} are not valid anymore, but controller~\eqref{eq: feedback controller} still performs sufficiently well to not be distinguishable from the reference as in Fig.~\ref{fig:traj and error}(\subref{fig:tracking comparison}). Instead, we analyze the position error shown on Fig.~\ref{fig:w rand}(\subref{fig:pos comp w rand 8s}). The trajectory tracks the reference with an average position error of $1.2\, mm$ and a maximal error of $4.1\, mm$. These values are extremely low but still represent a fourfold increase compared to the scenario with $\tau = 0.2\, s$.

\begin{figure}[htbp!]
    \centering
    \begin{subfigure}[]{0.49\textwidth}
        \includegraphics[scale = 0.45]{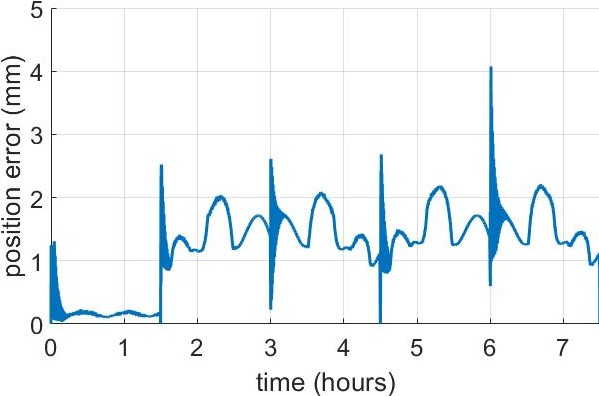}
        \caption{Position error between the reference and tracking trajectories.}
        \label{fig:pos comp w rand 8s}
    \end{subfigure}\hfill
    \begin{subfigure}[]{0.49\textwidth}
        \includegraphics[scale = 0.35]{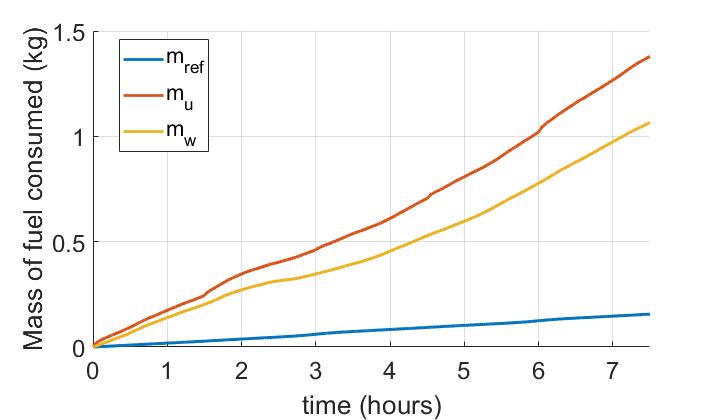}
        \caption{Comparisons of fuel used to complete the reference trajectory $m_\text{ref}$ (blue), the tracking trajectory $m_u$ (red), and the fuel used by malfunctioning thruster no. 4 $m_w$ (yellow).}
        \label{fig:fuel comp w rand 8s}
    \end{subfigure}
    \caption{Analysis of the trajectory tracking performance for a stochastic Lipschitz undesirable input $w$ and actuation delay $\tau = 8\, s$.}
    \label{fig:w rand}
\end{figure}

Concerning the fuel efficiency, the pseudo-equality $m \approx m_\text{ref} + m_w$ derived from Fig.~\ref{fig:fuel and velocity}(\subref{fig:fuel}) still holds approximately since $m_\text{ref} = 0.16\, kg$, $m_w = 1.06\, kg$ and $m_u = 1.38\, kg$ in this scenario. The controlled thrusters have only slightly increased their consumption compared to $m_u = 1.31\, kg$ for $\tau = 0.2\, s$. Thus, the actuation delay does not play as crucial a role for the fuel consumption as for the position error.

\subsection{Lipschitz undesirable thrust and actuation delay of 10 seconds}\label{subsec: w rand 10s}

If we increase further the actuation delay, e.g. $\tau = 10\, s$, controller~\eqref{eq: feedback controller} becomes incapable of tracking the reference trajectory as depicted on Fig.~\ref{fig: w rand 10s}(\subref{fig:velocity w rand 10s}). The velocity on the tracking trajectory is on average four times larger than the reference.

\begin{figure}[htbp!]
    \centering
    \begin{subfigure}[]{0.49\textwidth}
        \includegraphics[scale = 0.35]{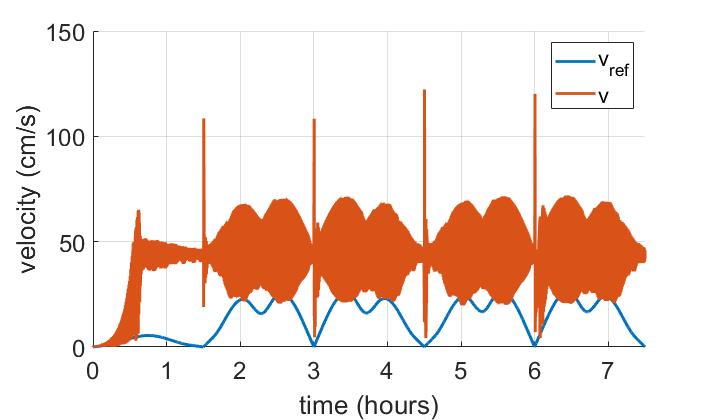}
        \caption{Velocity comparison for the reference trajectory $v_\text{ref}$ (blue) and the tracking trajectory $v$ (red).}
        \label{fig:velocity w rand 10s}
    \end{subfigure}\hfill
    \begin{subfigure}[]{0.49\textwidth}
        \includegraphics[scale = 0.45]{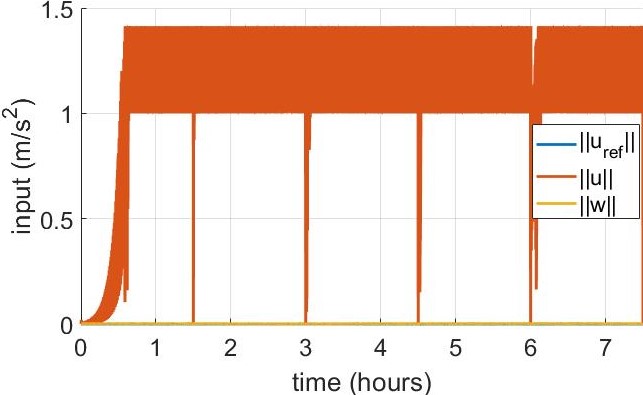}
        \caption{Comparison of input magnitudes with a saturated control $\|u\|$ (red) orders of magnitude larger than the reference $\|u_\text{ref}\|$ (blue) and the undesirable input $\|w\|$ (yellow).}
        \label{fig:inputs w rand 10s}
    \end{subfigure}
    \caption{Analysis of the trajectory tracking performance for a stochastic Lipschitz undesirable input $w$ and actuation delay $\tau = 10\, s$.}
    \label{fig: w rand 10s}
\end{figure} 

The position error has also steeply increased compared to the scenario where $\tau = 8\, s$ since here the average position error is $0.48\, m$ and the maximal error is $3\, m$. These values are still small enough to keep the tracking trajectory indistinguishable from the reference on a figure like Fig.~\ref{fig:traj and error}(\subref{fig:tracking comparison}). However, to maintain this accuracy, controller~\eqref{eq: feedback controller} had to saturate its thrust inputs as shown on Fig.~\ref{fig: w rand 10s}(\subref{fig:inputs w rand 10s}). This input saturation results in a prohibitive fuel consumption of $503\, kg$ compared to $m_u = 1.38\, kg$ for $\tau = 8\, s$. 
Now that we have probed the limits of controller~\eqref{eq: feedback controller} in terms of actuation delay, let us investigate the impact of the regularity of $w$ on the tracking performance.

\subsection{Bang-bang undesirable thrust and actuation delay of 1 second}\label{subsec: w bang 1s}

In this scenario we keep the actuation delay $\tau = 1\, s$, but the undesirable thrust signal $w$ is now bang-bang, as illustrated on Fig.~\ref{fig:w bang inputs}(\subref{fig:inputs w bang 1s}). This violates the Lipschitz assumption of Theorem~\ref{thm: closed loop res track} and hence invalidates its performance guarantees.

Controller~\eqref{eq: feedback controller} generates a trajectory with an average position error of $0.54\, mm$ and a maximal error of $5.6\, mm$ as shown on Fig.~\ref{fig:w bang}(\subref{fig:pos comp w bang 1s}). These values are comparable to the precision achieved in the scenario where $w$ was Lipschitz and $\tau = 8\, s$. As expected, increasing the unpredictability of $w$ from Lipschitz to bang-bang led to a degradation of the tracking performance.
Concerning the fuel usage in this scenario, Fig.~\ref{fig:w bang}(\subref{fig:fuel comp w bang 1s}) shows that the bang-bang thrust signal yields a significant consumption increase to $m_w = 4.86\, kg$ compared to $1.06\, kg$ in the Lipschitz scenarios. This increase is reflected on the controller's fuel usage $m_u = 5.33\, kg$, which remains close to $m_\text{ref} + m_w = 5.02\, kg$.

\begin{figure}[ht]
    \centering
    \begin{subfigure}[]{0.49\textwidth}
        \includegraphics[scale = 0.45]{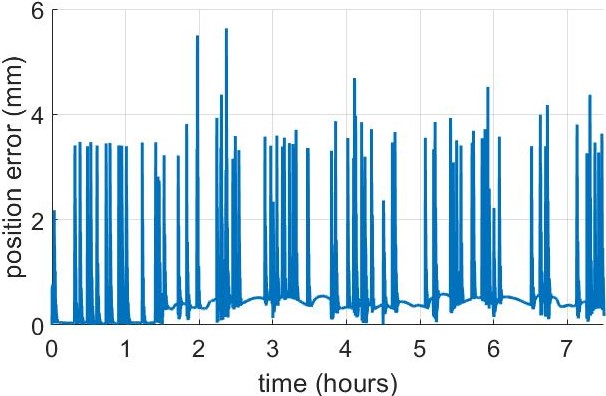}
        \caption{Position error between the reference and tracking trajectories.}
        \label{fig:pos comp w bang 1s}
    \end{subfigure}\hfill
    \begin{subfigure}[]{0.49\textwidth}
        \includegraphics[scale = 0.35]{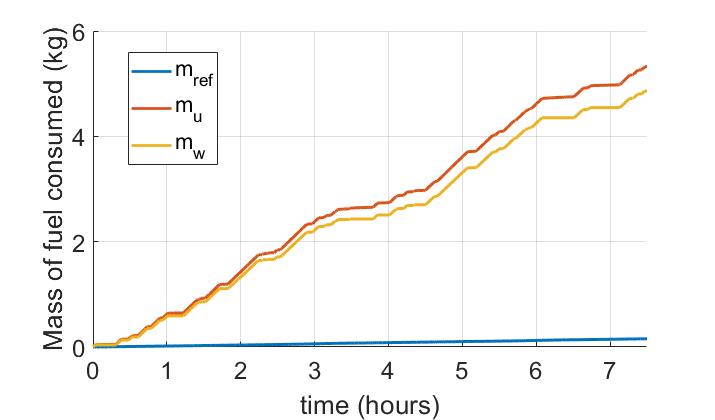}
        \caption{Comparisons of fuel used to complete the reference trajectory $m_\text{ref}$ (blue), the tracking trajectory $m_u$ (red), and the fuel used by malfunctioning thruster no. 4 $m_w$ (yellow).}
        \label{fig:fuel comp w bang 1s}
    \end{subfigure}
    \caption{Analysis of the trajectory tracking performance for a bang-bang undesirable thrust signal $w$ and actuation delay $\tau = 1\, s$.}
    \label{fig:w bang}
\end{figure}

Every time the undesirable thrust climbs to its maximum value, the controller reacts after a delay $\tau$ and with a $50\%$ higher spike to makeup for this delay, as illustrated on Fig.~\ref{fig:w bang inputs}(\subref{fig:inputs w bang 1s}). This overshoot explains the increased mass of fuel consumption by the controlled thrusters. Note also the similarity between Fig.~\ref{fig:w bang}(\subref{fig:pos comp w bang 1s}) and \ref{fig:w bang inputs}(\subref{fig:inputs w bang 1s}), each position error spike is associated with a spike of $w$.

\begin{figure}[htbp!]
    \centering
    \begin{subfigure}[]{0.49\textwidth}
        \includegraphics[scale = 0.45]{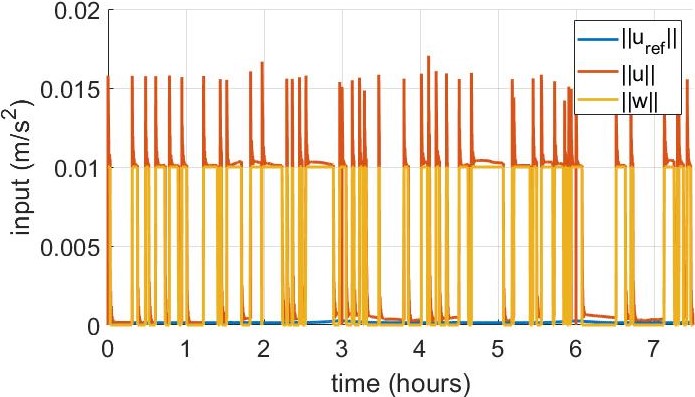}
        \caption{Actuation delay $\tau = 1\, s$.}
        \label{fig:inputs w bang 1s}
    \end{subfigure}\hfill
    \begin{subfigure}[]{0.49\textwidth}
        \includegraphics[scale = 0.45]{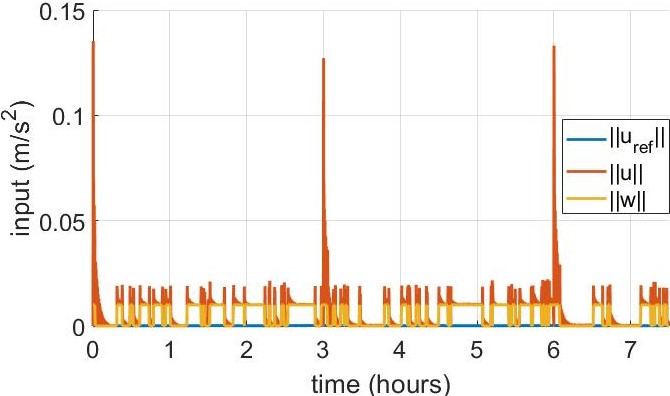}
        \caption{Actuation delay $\tau = 8\, s$.}
        \label{fig:inputs w bang 8s}
    \end{subfigure}
    \caption{Magnitude of the thrust inputs for the reference trajectory $\|u_\text{ref}\|$ (blue), the tracking trajectory $\|u\|$ (red), and the bang-bang undesirable input $\|w\|$ (yellow) for different actuation delays.}
    \label{fig:w bang inputs}
\end{figure}

Since controller~\eqref{eq: feedback controller} is still able to track the reference trajectory, we will consider a more challenging scenario with an increased actuation delay.

\subsection{Bang-bang undesirable thrust and actuation delay of 8 seconds}\label{subsec: w bang 8s}

We increase the actuation delay to $\tau = 8\, s$ while keeping the same bang-bang undesirable thrust signal as in the previous scenario. The overshoots of the controller have become much larger at three waypoints as shown on Fig.~\ref{fig:w bang inputs}(\subref{fig:inputs w bang 8s}), while the overshoots at other locations have an amplitude similar to that of $w$. These large spikes are still an order of magnitude smaller than those of Fig.~\ref{fig: w rand 10s}(\subref{fig:inputs w rand 10s}), so the controller is not saturating yet.

\begin{figure}[htbp!]
    \centering
    \begin{subfigure}[]{0.49\textwidth}
        \includegraphics[scale = 0.45]{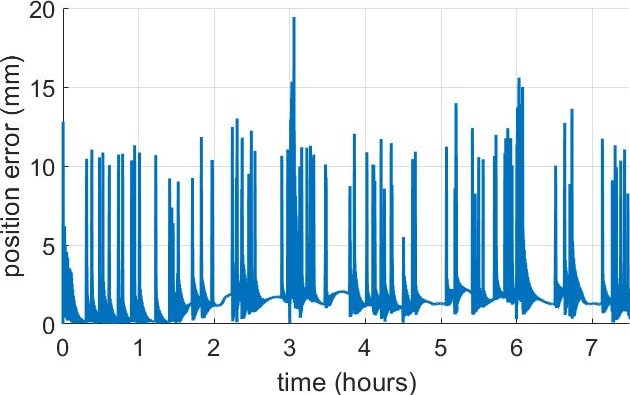}
        \caption{Position error between the reference and tracking trajectories.}
        \label{fig:pos comp w bang 8s}
    \end{subfigure}\hfill
    \begin{subfigure}[]{0.49\textwidth}
        \includegraphics[scale = 0.35]{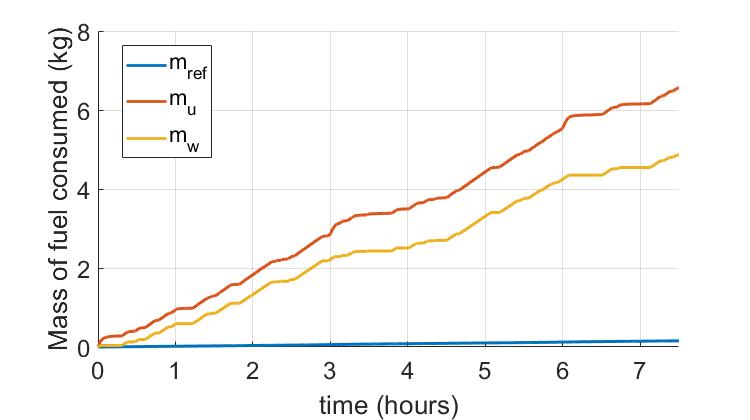}
        \caption{Comparisons of fuel used to complete the reference trajectory $m_\text{ref}$ (blue), the tracking trajectory $m_u$ (red), and the fuel used by malfunctioning thruster no. 4 $m_w$ (yellow).}
        \label{fig:fuel comp w bang 8s}
    \end{subfigure}
    \caption{Analysis of the trajectory tracking performance for a bang-bang undesirable thrust signal $w$ and actuation delay $\tau = 8\, s$.}
    \label{fig:w bang 8s}
\end{figure}

The average position error with respect to the reference trajectory is $1.76\, mm$ and the maximal error is $19.4\, mm$. These values represent approximately a fourfold increase compared to the scenario of Section~\ref{subsec: w bang 1s}. As in the Lipschitz cases where we also witnessed a fourfold increase between $\tau = 0.2\, s$ and $\tau = 8\, s$, the increased actuation delay has significant impact on the tracking accuracy.

The undesirable thrust still consumes $m_w = 4.86\, kg$ of fuel, but the controller now needs $m_u = 6.56\, kg$ according to Fig.~\ref{fig:w bang 8s}(\subref{fig:fuel comp w bang 8s}) instead of $5.33\, kg$ for $\tau = 1\, s$. This consumption increase is most likely caused by the large thrust spikes of Fig.~\ref{fig:w bang inputs}(\subref{fig:inputs w bang 8s}).
As in the Lipschitz case, the increased actuation delay does not have a significant impact on the fuel consumption. However, if we increase $\tau$ to $10\, s$, then the situation is similar as that of Section~\ref{subsec: w rand 10s} with a prohibitive increase in fuel consumption to keep the malfunctioning spacecraft close to the reference orbit.

In all scenarios tested so far, the undesirable thrust signal was saturated at $1\%$ of its capability to have $\|w\|$ of the same order of magnitude as $\|u_\text{ref}\|$ as depicted on Fig.~\ref{fig:inputs}(\subref{fig:total_inputs}). In the first scenario $\|u\|$ was also of the same order of magnitude. However, we see in this scenario that $u$ sometimes needs to significantly overshoot $w$. Therefore, we must also investigate the scenario where $w$ has access to its whole thrust capability, i.e., $w(t) \in [0, 1]$, to assess whether it can be counteracted by $u$ despite its saturation limit.

\subsection{Saturated Lipschitz undesirable thrust and actuation delay of 2 second}\label{subsec: rand w=1}

We will now investigate the case of a Lipschitz undesirable thrust input $w$ where $L = 0.1$, $\max\{ w(t),\, t \geq 0\} = 1$ and $\tau = 2\, s$. Since $w$ makes use of its full range of thrust actuation, the controlled thrusters might reach their own saturation limit. The simulation results show the undesirable thrust signal meeting both its saturation limits, $w(t) \in [0, 1]$ as seen on Fig.~\ref{fig:inputs thrusters w rand 1}(\subref{fig:inputs w rand 1}). The controlled thrusters however, do not reach their own saturation since the individual magnitude of each thruster never reaches $1$ on Fig.~\ref{fig:inputs thrusters w rand 1}(\subref{fig:thrusters w rand 1}), except at 1 hour 30 minutes. This saturation can also be seen on Fig.~\ref{fig:inputs thrusters w rand 1}(\subref{fig:inputs w rand 1}) where $\|u\| = \sqrt{2}$.

\begin{figure}[htbp!]
    \centering
    \begin{subfigure}[]{0.49\textwidth}
        \includegraphics[scale = 0.45]{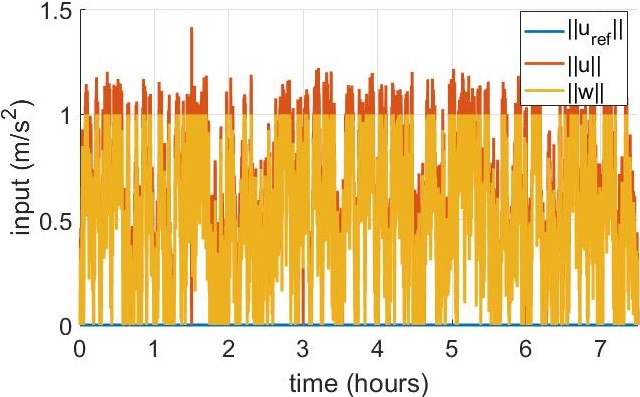}
        \caption{Comparison of input magnitude between the reference $\|u_\text{ref}\|$ (blue), tracking control $\|u\|$ (red) and the saturated undesirable input $\|w\|$ (yellow).}
        \label{fig:inputs w rand 1}
    \end{subfigure}\hfill
    \begin{subfigure}[]{0.49\textwidth}
        \includegraphics[scale = 0.35]{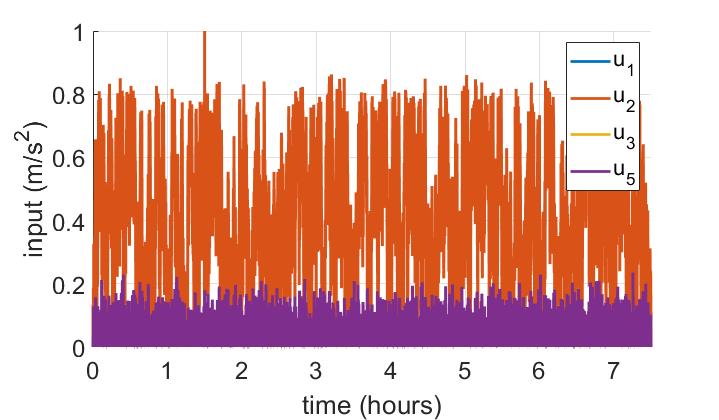}
        \caption{Thrust profiles for the four controlled thrusters of the chaser satellite on the tracking trajectory.}
        \label{fig:thrusters w rand 1}
    \end{subfigure}
    \caption{Analysis of the thrust inputs for a Lipschitz undesirable thrust signal $w$ and actuation delay $\tau = 2\, s$.}
    \label{fig:inputs thrusters w rand 1}
\end{figure}

Based on Fig.~\ref{fig:inputs thrusters w rand 1}(\subref{fig:thrusters w rand 1}), we can see that thruster no. 2 is producing the thrust necessary to counteract $w$. Thruster no. 1 is actually matching $u_2$, just as in Fig.~\ref{fig:inputs}(\subref{fig:thrusters}), except that we cannot see it on Fig.~\ref{fig:inputs thrusters w rand 1}(\subref{fig:thrusters w rand 1}) because $u_2$ covers $u_1$.
The average position error is contained to $48\, mm$, while the maximal position error is $0.29\, m$ as shown on Fig.~\ref{fig:pos velocity w rand 1}(\subref{fig:pos comp w rand 1}). Then, the tracking trajectory stays sufficiently close to the reference to not be distinguishable on a figure like Fig.~\ref{fig:traj and error}(\subref{fig:tracking comparison}). The tracking velocity presents large fluctuations above the reference velocity as shown on Fig.~\ref{fig:velocity w rand 1}(\subref{fig:velocity w rand 1}) while staying much closer than in the scenario of Section~\ref{subsec: w rand 10s} where $v$ was entirely above $v_\text{ref}$ as seen on Fig.~\ref{fig: w rand 10s}(\subref{fig:velocity w rand 10s}).

\begin{figure}[htbp!]
    \centering
    \begin{subfigure}[]{0.49\textwidth}
        \includegraphics[scale = 0.45]{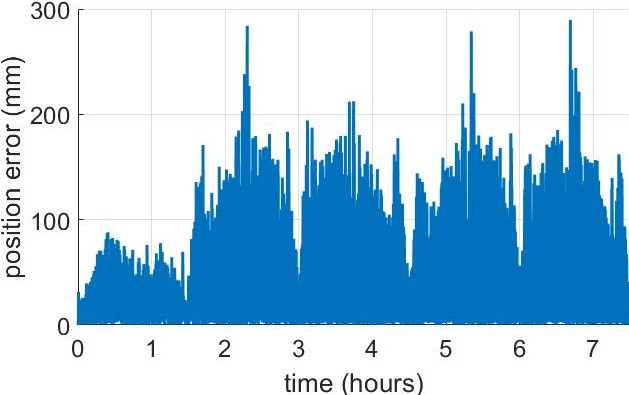}
        \caption{Position error between the reference and tracking trajectories.}
        \label{fig:pos comp w rand 1}
    \end{subfigure}\hfill
    \begin{subfigure}[]{0.49\textwidth}
        \includegraphics[scale = 0.45]{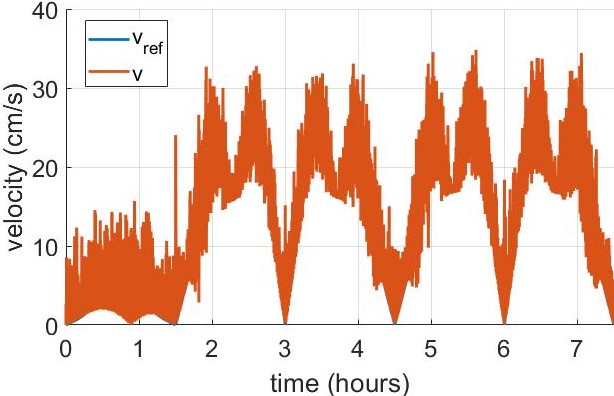}
        \caption{Velocity comparison for the reference trajectory $v_\text{ref}$ (blue) and the tracking trajectory $v$ (red).}
        \label{fig:velocity w rand 1}
    \end{subfigure}
    \caption{Analysis of the trajectory tracking performance for a Lipschitz undesirable thrust signal $w$ and actuation delay $\tau = 2\, s$.}
    \label{fig:pos velocity w rand 1}
\end{figure}

Because of the large thrusts employed in this scenario, the masses of fuel consumed have also significantly increased. The controlled thrusters would need $m_u = 360\, kg$ of fuel, while the malfunctioning thruster is guzzling $m_w = 342\, kg$ of fuel over the 7.5 hours of the mission. These masses are relatively close, within $5\%$ of each other, which tells us that the controller is not wasting too much extra fuel in overshoots, it uses only what is needed to counteract $w$.
However, recall that our spacecraft mass was set at $600\, kg$. Thus, if such a malfunction were to happen, the thrusters would run out of fuel before completing the mission. 
Nevertheless, while fuel is available, we now know that controller~\eqref{eq: feedback controller} can compensate time-varying undesirable thrust of maximal amplitude.

With the same Lipschitz undesirable thrust signal, but an actuation delay $\tau = 3\, s$ instead of $2\, s$, the trajectory quickly diverge from the reference. This was somewhat predictable from the saturation of $u_2$ in Fig.~\ref{fig:inputs thrusters w rand 1}(\subref{fig:thrusters w rand 1}).
Let us now study how controller~\eqref{eq: feedback controller} would fare against a bang-bang undesirable input of similar magnitude.

\subsection{Saturated bang-bang undesirable thrust and actuation delay of 1 second}\label{subsec: bang w=1}

In this scenario $w$ is bang-bang in $[0, 1]$ and the actuation delay is $\tau = 1\, s$. The simulation shows clearly the bang-bang behavior of the undesirable thrust signal on Fig.~\ref{fig:inputs thrusters w bang 1}(\subref{fig:inputs w bang 1}). The controlled thrusters are also reaching their own saturation limit of $1$ on Fig.~\ref{fig:inputs thrusters w bang 1}(\subref{fig:thrusters w bang 1}), except at 1 hour 30 minutes. This saturation can also be seen on Fig.~\ref{fig:inputs thrusters w rand 1}(\subref{fig:inputs w rand 1}) where $\|u\| = \sqrt{2}$.

\begin{figure}[htbp!]
    \centering
    \begin{subfigure}[]{0.49\textwidth}
        \includegraphics[scale = 0.45]{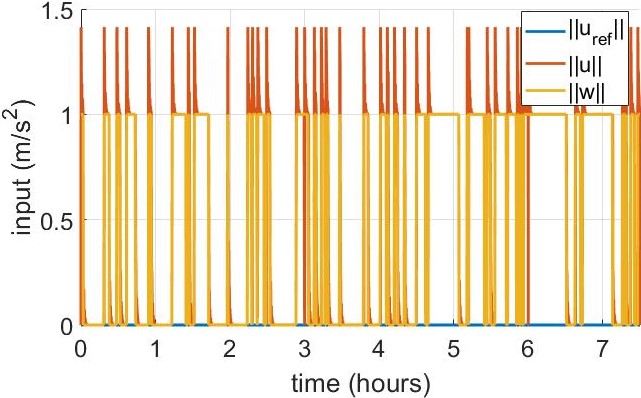}
        \caption{Comparison of input magnitude between the reference $\|u_\text{ref}\|$ (blue), tracking control $\|u\|$ (red) and the bang-bang undesirable input $\|w\|$ (yellow).}
        \label{fig:inputs w bang 1}
    \end{subfigure}\hfill
    \begin{subfigure}[]{0.49\textwidth}
        \includegraphics[scale = 0.45]{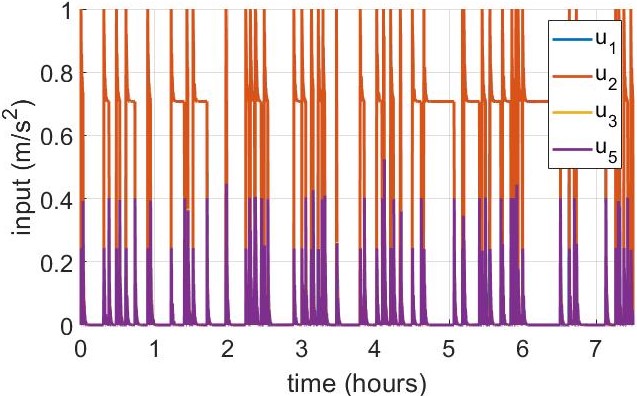}
        \caption{Thrust profiles for the four controlled thrusters of the chaser satellite on the tracking trajectory.}
        \label{fig:thrusters w bang 1}
    \end{subfigure}
    \caption{Analysis of the trajectory tracking performance for a bang-bang undesirable thrust signal $w$ and actuation delay $\tau = 1\, s$.}
    \label{fig:inputs thrusters w bang 1}
\end{figure}

As shown on Fig.~\ref{fig:pos velocity w bang 1}(\subref{fig:pos comp w bang 1}), the average position error is $17.1\, mm$ and the maximal position error is $0.5\, m$, so both trajectories are still indistinguishable on a figure like Fig.~\ref{fig:traj and error}(\subref{fig:tracking comparison}).
We note the presence of a velocity spike on Fig.~\ref{fig:pos velocity w bang 1}(\subref{fig:velocity w bang 1}) for each spike of $w$ on Fig.~\ref{fig:inputs thrusters w bang 1}(\subref{fig:inputs w bang 1}).

\begin{figure}[htbp!]
    \centering
    \begin{subfigure}[]{0.49\textwidth}
        \includegraphics[scale = 0.45]{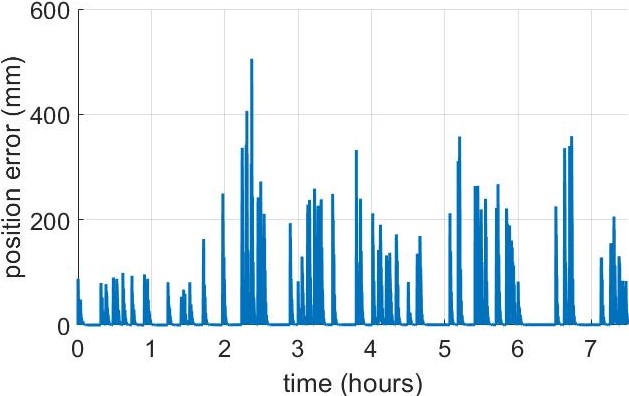}
        \caption{Position error between the reference and tracking trajectories.}
        \label{fig:pos comp w bang 1}
    \end{subfigure}\hfill
    \begin{subfigure}[]{0.49\textwidth}
        \includegraphics[scale = 0.45]{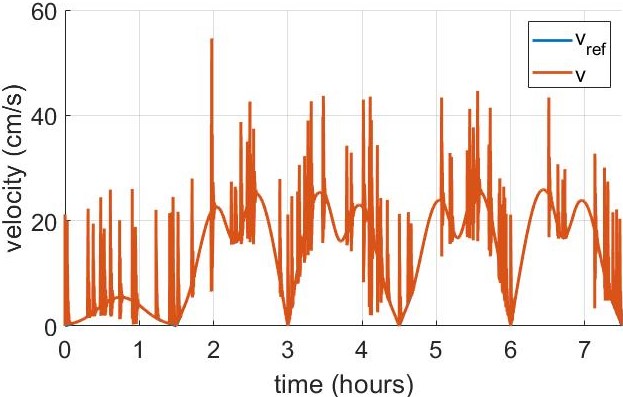}
        \caption{Velocity comparison for the reference trajectory $v_\text{ref}$ (blue) and the tracking trajectory $v$ (red).}
        \label{fig:velocity w bang 1}
    \end{subfigure}
    \caption{Analysis of the trajectory tracking performance for a bang-bang undesirable thrust signal $w$ and actuation delay $\tau = 1\, s$.}
    \label{fig:pos velocity w bang 1}
\end{figure}

As in the previous scenario, the fuel consumption is too large for the mission to be completed with such a malfunctioning thruster, but while it is active it can be actively counteracted by $u_1$ and $u_2$ as shown on Fig.~\ref{fig:inputs thrusters w bang 1}(\subref{fig:thrusters w bang 1}). The masses of fuel consumed by $u$ and $w$ are also relatively close, within $3\%$ of each other, with $m_u = 345\, kg$ and $m_w = 336\, kg$, which relates to the efficiency of the controller.

If we further increase the actuation delay to $\tau = 2\, s$ for the same undesirable thrust $w$, the trajectory quickly diverge from the reference. Since the controlled inputs were already saturated for $\tau = 1\, s$ as seen on Fig.~\ref{fig:inputs thrusters w bang 1}(\subref{fig:thrusters w bang 1}), the controller was not able to overcome a more unpredictable $w$ and this divergence is not surprising.

\subsection{Summary of the simulation scenarios}

We summarize the scenarios studied above in Table~\ref{tab: summary}.
For each scenario, we compute the average and maximal position errors, the mass of fuel used by the controlled thrusters $m_u$ and by malfunctioning thruster no. 4 $m_w$. We also calculate the relative difference of fuel used $\frac{m_u - m_w - m_\text{ref}}{m_w + m_\text{ref}}$, which is a good metric for the efficiency of controller~\eqref{eq: feedback controller} in overcoming $w$ without excessive thrust. 
\begin{table}[htbp!]
    \centering
    \def\arraystretch{1.2}
    \begin{tabular}{m{19mm}m{15mm}m{17mm}|m{15mm}m{17mm}m{23mm}m{20mm}m{21mm}}
        Regularity of $w$ & Actuation delay $\tau$ & Saturation of $w$ & Average position error & Maximal position error & Fuel used by controlled thrusters $m_u$ & Fuel used by thruster no.4 $m_w$ & Relative difference of fuel used\\ \hline
        Lipschitz & $0.2\, s$ & $0.01$ & $0.36\, mm$ & $1.05\, mm$ & $1.31\, kg$ & $1.06\, kg$ & $7.4\%$ \\
        Lipschitz & $8\, s$ & $0.01$ & $1.2\, mm$ & $4.1\, mm$ & $1.38\, kg$ & $1.06\, kg$ & $13.1\%$ \\
        Lipschitz & $10\, s$ & $0.01$ & $484\, mm$ & $3103\, mm$ & $503\, kg$ & $1.06\, kg$ & $41130\%$\\
        bang-bang & $1\, s$ & $0.01$ & $0.54\, mm$ & $5.6\, mm$ & $5.33\, kg$ & $4.86\, kg$ & $6.2\%$ \\
        bang-bang & $8\, s$ & $0.01$ & $1.76\, mm$ & $19.4\, mm$ & $6.56\, kg$ & $4.86\, kg$ & $31\%$ \\
        Lipschitz & $2\, s$ & $1$ & $48\, mm$ & $292\, mm$ & $360\, kg$ & $342\, kg$ & $5.2\%$ \\
        bang-bang & $1\, s$ & $1$ & $17.1\, mm$ & $509\, mm$ & $345\, kg$ & $336\, kg$ & $2.6\%$
    \end{tabular}
    \caption{Summary of the simulation scenarios.}
    \label{tab: summary}
\end{table}

Let us now summarize the findings of the various scenarios studied.
Despite the narrow range of application of Theorem~\ref{thm: closed loop res track}, controller~\eqref{eq: feedback controller} provides tracking accuracy to the millimeter scale on a wide range of scenarios with fast-varying undesirable inputs and long actuation delays.
A potential issue limiting the application of controller~\eqref{eq: feedback controller} was the expected saturation of $u$ when $w$ reaches its maximal amplitude. However, scenarios of Section~\ref{subsec: rand w=1} and \ref{subsec: bang w=1} showed that for small actuation delays a maximal $w$ can still be efficiently counteracted. 
We can visually summarize the performance of controller~\eqref{eq: feedback controller} with the Pareto front of Fig.~\ref{fig:pareto} describing the maximal value of $w$ allowing successful tracking at a given actuation delay $\tau$. Based on the scenarios investigated in this section, we decided to consider the tracking to be successful when the position error is always smaller than $0.8\, m$, which is $1\%$ of the minimal target distance on the reference trajectory.

\begin{figure}[htbp!]
    \centering
    \includegraphics[scale = 0.55]{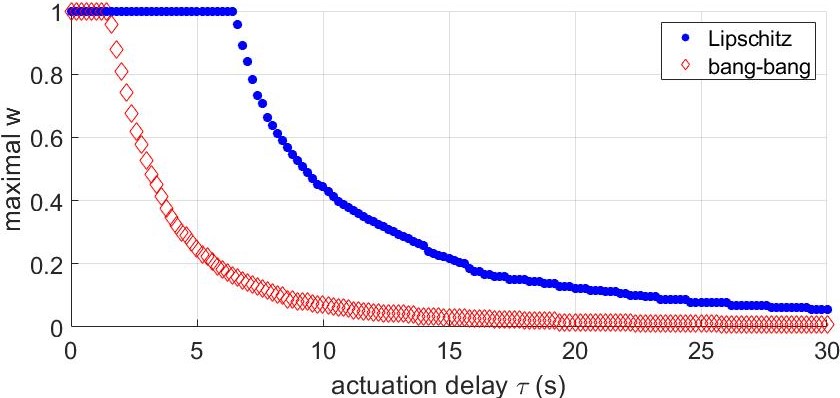}
    \caption{Pareto front of the maximal saturation limit of $w$ for which controller~\eqref{eq: feedback controller} maintains a position error under $0.8\, m$ despite actuation delay $\tau$.}
    \label{fig:pareto}
\end{figure}

To conclude, after the loss of control authority over thruster no. 4, controller~\eqref{eq: feedback controller} ensures completion of the inspection mission for any undesirable thrust signal $w$ as long as the actuation delay is inferior to $1\, s$.

\section{Conclusion and future work}

We presented a new methodology to safely perform a satellite inspection mission despite the loss of control authority over a thruster of the inspecting satellite. The controller of this malfunctioning spacecraft is further plagued by a constant actuation delay. To mitigate partial loss of control authority and actuation delays, we developed resilience theory for linear systems with actuation delay before extending these results to the nonlinear dynamics of the spacecraft under study. We established analytical trajectory tracking guarantees on a resilient feedback controller embedded with a state predictor to compensate for the actuation delay. This controller enables a resilient tracking of the reference trajectory and a safe completion of the inspection mission even when the uncontrolled thruster produces maximal bang-bang inputs.

There are several promising avenues of future work. First, we want to implement our controller on a high-fidelity simulator to study its robustness to unmodeled dynamics such as the nonlinear terms neglected in the Clohessy-Wiltshire framework \citep{Ortolano}. 
We also have the objective of extending resilience theory to nonlinear dynamics by deriving a new proof for H\'ajek's duality theorem \citep{Hajek}. This extension would enable the study of spacecraft models combining position and attitude for a more realistic treatment.

\section*{Acknowledgment}

This work was supported by an Early Stage Innovations grant from NASA’s Space Technology Research Grants Program, grant no. 80NSSC19K0209, and by NASA grant no. 80NSSC21K1030.

\bibliographystyle{IEEEtran}
\bibliography{ref}

\end{document}